\newtheoremstyle{mystyle} % Name
{}    % Space above
{}    % Space below
{\nopagebreak}         % Body font
{}        % Indent amount
{\scshape \bfseries}% Theorem head font
{:}        % Punctuation after theorem head
{ }        % Space after theorem head 
\theoremstyle{mystyle}
\newtheorem{definition_}{Definition} %[section]
\newtheorem{example}{Example} %[section] 
\newtheorem{theorem}{Theorem} %[section]
\newtheorem{corollary}{Corollary}%[section] 
\newtheorem{myrule}{Rule} %[section]
\newcommand{\rte}{$rt_{\text{est}}$}
\newcommand{\rtb}{$rt_{\text{est}}$ }
\newcommand{\crb}{$cr_{\text{eff}}$ }
\newcommand{\Pe}{O$|$R$|$P$|$E }
\newcommand{\PeFS}{O$|$R$|$P$|$E}
\begin{document}

\title{\textbf{O$|$R$|$P$|$E - A Data Semantics Driven Concurrency Control
Mechanism with Run-time Adaptation}}

\author{
\IEEEauthorblockN{
	Tim Lessner\IEEEauthorrefmark{1},
	Fritz Laux\IEEEauthorrefmark{2},
	Thomas M Connolly\IEEEauthorrefmark{3}
}
\IEEEauthorblockA{
	\IEEEauthorrefmark{1}freiheit.com technologies gmbh, Hamburg, Germany \\ Email:
	tim.lessner@freiheit.com} \IEEEauthorblockA{\IEEEauthorrefmark{2}Reutlingen University,
	Reutlingen, Germany\\
	Email: fritz.laux@reutlingen-university.de}
	\IEEEauthorblockA{\IEEEauthorrefmark{3}University of the West of Scotland,
	Paisley, UK\\
	Email: thomas.connolly@uws.ac.uk}
}

\maketitle

\begin{abstract}
This paper presents a concurrency control mechanism that does not follow a 'one concurrency control mechanism fits all needs' strategy.
With the presented mechanism a transaction runs under several concurrency
control mechanisms and the appropriate one is chosen based on the accessed data.
For this purpose, the data is divided into four classes based on its access type
and usage (semantics). Class $O$ (the optimistic class) implements a
first-committer-wins strategy, class $R$ (the reconciliation class) implements a
first-n-committers-win strategy, class $P$ (the pessimistic class)
implements a first-reader-wins strategy, and class $E$ (the escrow
class) implements a first-n-readers-win strategy. Accordingly, the
model is called \PeFS. 
The selected concurrency control mechanism may be automatically adapted at run-time according to the current load or a known usage profile. This run-time adaptation allows \Pe to balance the commit rate and the response time even under changing conditions.
\Pe outperforms the Snapshot Isolation concurrency control in terms of response time by a factor of approximately 4.5 under heavy transactional load (4000 concurrent transactions). As consequence, the degree of concurrency is 3.2 times higher. 
\end{abstract}

\begin{IEEEkeywords}
Transaction processing; multimodel concurrency control; 
optimistic concurrency control; snapshot isolation; performance analysis; run-time adaptation.
\end{IEEEkeywords}

\section{Introduction}
\label{sec:introduction}
The drawbacks of existing concurrency control (CC) mechanisms are that
pessimistic concurrency control (PCC) is likely to block transactions and is
prone to deadlocks, optimistic concurrency control (OCC) may experience a sudden
decrease in the commit rate if contention increases.
Snapshot Isolation (SI) better supports query processing since transactions
generally operate on snapshots and also prevents read anomalies, but depending
on the implementation of SI, either pessimistic or optimistic, it is also
subject to the previously mentioned drawbacks of PCC or OCC. Semantics based CC
(SCC) %such as the mechanism proposed in \cite{Laux2009} 
remedies some problems
of PCC or OCC. It performs well under contention, reduces the blocking time, and
better supports disconnected operations. However, its applicability is limited
since data and transactions have to comply with specific properties such as the
commutativity of operations. In addition to the previously mentioned drawbacks,
neither PCC nor OCC nor SCC support long-lived and disconnected data processing.
However, these properties are essential to achieve scalability in Web-based and loosely coupled applications. 
Another challenge is that in real-life scenarios often the data usage profile changes over time (e.g. stock refill in the morning, selling goods during business hours, housekeeping during closing hours) which calls for a dynamic CC-mechanism.

This paper extends a mechanism presented in \cite{LessnerDBKDA2015} and originally introduced in \cite{Lessner2014} that
combines OCC, PCC, and SCC and steps away from the \textbf{`one concurrency
control mechanism fits all needs'} strategy. Instead, the CC mechanism is chosen
depending on the data usage. While the original \Pe model assigns the appropriate CC-mechanism statically, this paper addresses a dynamic adaptation of the CC-mechanism due to sudden changes of the system load. To address scalability, the mechanism was designed with a focus on long-lived and disconnected data processing.

Consider, for example, the wholesale scenario as presented in the TPC-C \cite{TPCC.2010}.
With PCC using shared and exclusive locks, the likelihood of deadlocks increases
for hot spot fields such as the stock's quantity or the account's debit or
credit. If transactions are long-lived, PCC is even worse since deadlocks
manifest during write time and a significant amount of work is likely to be lost
\cite{Thomasian1998a} \cite{Lessner2014}. With OCC, deadlocks cannot occur. However,
hot-spot fields like an account's debit or credit would experience many version
validation failures under high load causing the restart of a transaction. Like
PCC, validation failures manifest during the write-phase of a transaction and a
significant amount of work is likely to be lost.
Both PCC and OCC cannot ensure that modifications attempted during a
transaction's read-phase will prevail during the write-phase. Whereas PCC is
prone to deadlocks, OCC is prone to its optimistic
nature itself.

\Pe resolves these drawbacks and data can be classified in CC classes. For
example, customer data such as the address or password can be controlled by a
PCC that uses exclusive locks only \cite{JimGray.1993}. Such a rigorous measure
ensures ownership of data and should be used if data is modified that belongs to
one transaction. For example, account data or master data should not be modified
concurrently and given the importance of this data a rigorous isolation is
justified. The debit or credit of an account can be classified in CC class $R$,
which guarantees no lost updates and no constraint violations. Such a guarantee
is often sufficient for hot-spot fields. Class $E$ can be used to access an
item's stock, for example.
Class $E$ is able to handle use cases such as reservations.
It should be used if during the read-phase a guarantee is required that the
changes will succeed during the write-phase.
Class $O$ is the default class. It avoids blocking and under normal load it
represents a good trade-off between commit and abort-rate.

Section \ref{sec:model} defines these four CC classes with different data access
strategies used by our mechanism. In the case of a conflict, class $O$ implements a
\textbf{first-committer-wins} strategy, class $R$ implements a
\textbf{first-n-committers-win} strategy, class $P$ implements a
\textbf{first-reader-wins} strategy, and class $E$ implements a
\textbf{first-n-readers-win} strategy. The number $n$ is determined by the
semantics of the accessed data, e.g., by database constraints. According to the
classes, the mechanism is called \PeFS. The ``$|$'' indicates the demarcation
between data.

Section \ref{sec:correctness} proofs the correctness of the model. Section
\ref{sec:prototypical_ref_impl} briefly describes the prototype implementation.
Section \ref{sec:performance_study} highlights some advantages of \PeFS, because
it provides an application flexibility in choosing the best suitable CC
mechanism and thereby significantly increases the commit rate and outperforms
optimistic SI. 
The run-time adaptation mechanism and its adaptation rules are presented in Section \ref{sec:perform_adapt}. In the following Section \ref{sec:performance_study} a prototype implementation is tested with various workloads. The results are discussed and the behavior is illustrated with time diagrams. Section \ref{sec:related_work} summarizes related work and compares it to our model.
Finally, the paper draws some conclusions and provides an outlook (see Section \ref{sec:outlook}) to future work.

\section{Model}
\label{sec:model}
The model relies on disconnected transactions and 4 CC classes, which are defined in the following.
\subsection{Transaction}
To support long-lived and disconnected data processing, which both supports
scalability, \Pe models a transaction as a disconnected transaction $\tau$, with
separate read- and write-phase, i.e., no further read after the first write
operation (see Definition \ref{def:disconnected_transaction}, taken from
\cite{Lessner2014}). To disallow blind writes, \Pe guarantees that in addition to
the value of a field, the version of a data field has to be read, too.

\begin{definition_} Disconnected Transaction:
\label{def:disconnected_transaction}
	\begin{enumerate}
		\item Let $ta$ be a flat transaction that is defined as a pair $ta=(OP,<)$ where
		$OP$ is a finite set of steps of the form $r(x)$ or $w(x)$ and ${<} (\subseteq OP
		\times OP)$ is a partial order.
		
		\item A disconnected transaction $\tau=(TA^R, TA^W)$ consists of two
		disjoint sets of transactions. $TA^R=\{ta^R_1,\ldots,ta^R_i\}$ to read and
		$TA^W=\{ta^W_1,\ldots,ta^W_j\}$ to write the proposed modifications back.
		
		\item A transaction has to read any data item $x$ before being allowed to
		modify $x$ (no blind writes).
		
		\item If a transaction only reads data it has to be labeled as read only.
	\end{enumerate}
\end{definition_}

\subsection{CC Classes}
Class $O$ is the default class and is implemented  by an optimistic SI mechanism,
which is advantageous since reads do not block writes and non-repeatable or
phantom phenomena do not happen. However, SI is not fully serializable
\cite{Fekete2005a} \cite{Cahill2009}.

As stated, the drawback of optimistic mechanisms prevails if load increases,
because many transactions may abort during their validation at commit time. An
abort at commit time is expensive, because significant amount of work might be
lost. A circumstance particularly crucial for long-lived transactions (see
\cite{Lessner2014}).

Regarding the strategy, optimistic SI follows a ``first-committer-wins''
semantics revealing another drawback of $O$. It is the lack of an option
allowing a transaction to explicitly run as an owner of some data.
Consider, for example, the private data of a user such as its password or
address. A validation failure should be prevented by all means, since it would
mean that at least two transactions try to concurrently update private data.
Although technically this is a reasonable state, for this kind of data a
pessimistic approach that acquires all locks at read time is more appropriate.
Such a mechanism follows a ``first-reader-wins'' (ownership) semantics and
directly leads to class $P$. The acquisition of exclusive locks at read time
prevents deadlocks during write time. To prevent deadlocks at all, a strict
sequential access and preclaiming (all locks appear before the first read) or
sorted read-sets are possible mechanisms. Which mechanism is chosen to prevent
or resolve deadlocks is unimportant regarding the correctness of \Pe (see
Section \ref{sec:correctness}). Preclaiming has its drawbacks concerning the
time a lock has to be acquired. Sorted read-sets may be unfeasible due to
limitations of the storage layer or chosen index structure. The prototype (see
Section \ref{sec:prototypical_ref_impl}) uses a Wait-For-Graph to prevent
deadlocks during the read-phase of a transaction. Also, during our experiments
(see Section \ref{sec:performance_study}) the number of deadlocks was
considerably small, because data classified in $P$ should have no concurrent
modifications by definition.

The decision if a data item is classified as $O$ or $P$ is based on the following
properties \cite{Lessner2014}:

\begin{enumerate}
\item \textit{Mostly read} ($mr$): Is the data item mostly read?
If 'Yes', there is no need for restrictive measures and the data item should by
classified for optimistic validation. A low conflict probability is assumed.
\item \textit{Frequently written} ($fw$): $fw$ is the opposite of $mr$.
\item \textit{unknown} ($un$): It means neither $mr$ nor $fw$ apply, i.e., it is
unknown whether an item is mostly read or written or approximately even.
\item \textit{Ownership} ($ow$): if accessing a data item should explicitly
cause the transaction to own this item for its lifetime?
\end{enumerate}

\begin{example} Classify data items in class $O$ and $P$
(taken from \cite{Lessner2014}).
\label{example:classifiy_data_items_in_O_and_P}

This example is based on the TPC-C \cite{TPCC.2010} benchmark and its
``New-Order'' transaction. Note that an additional table Account has been
introduced to keep track about a customer's bookings (column debit and credit).
It also defines an overdraft limit (column limit). The following tables are used
in our example: Customer (id, name, surname), Stock (StockId, ItemId, quantity),
Account (AcctNo, debit, credit, limit), and Item (ItemId, name, unit, price).
Table \ref{table:example_customer_order_first_task} shows an initial
classification. 

Attributes \textit{name}, \textit{surname}, and \textit{id} of a customer are
expected to be mostly read, but if modified by a transaction it should
definitively be the owner. The \textit{id} of a customer, like all ids, is
expected to become modified rarely. If the \textit{id} becomes modified,
ownership is required. In principal, all business keys should be classified in
$P$, because they are owned by the application provider (see Rule
\ref{rule:premises_CC_class}, 1)).

\textit{Stock.quantity} is expected to become modified frequently ($fw$) and to
prevent the situation where an item was marked as available during the read
phase, but at commit time the item is no longer available due to concurrent
transactions, it is also marked as $ow$. For the time being, however,
\textit{quantity} will be classified as an ambiguity (see also Rule
\ref{rule:premises_CC_class}, 3)), which will be discussed below.

The \textit{Account.credit} and \textit{Account.debit} of a customer's account
might be accessed frequently depending on a customer's activity and $un$ is a good
choice. However, since multiple transactions might concurrently update the
balance, and an owner is hardly identifiable, $\neg ow$ is chosen. So, it is
also an ambiguity (see Rule~ \ref{rule:premises_CC_class}, 3)).

The \textit{Account.limit} is the overdraft limit of a customer and expected to
be mostly read, hence, $mr$ is a good choice. Since it is neither owned by the
customer nor by others, $\neg ow$ is a good choice (see Rule~
\ref{rule:premises_CC_class}, 2)).

Assuming the application is a high frequency trading application,
\textit{Item.Price} might quickly become a bottleneck. An exact prediction is
not possible though, hence, $un$ is a good choice. Property $ow$ would not be a
good choice, because transactions of different components ($dc$) might
simultaneously calculate the price (see Rule~ \ref{rule:premises_CC_class},
3)).
\end{example}

\begin{table}[h] \centering
\caption{Classification of Example
\ref{example:classifiy_data_items_in_O_and_P}}
    \begin{tabular}{| c | c | c | c | c | c |}
    \cline{1-6}
    $x$ & $mr$ & $fw$ & $un$ & $ow$ & CC class \\ \hline
	Customer.name			&		1&0&0&1&$P$		
	\\ \hline
	Customer.surname				&		1&0&0&1&$P$		
	\\ \hline
	Customer.id				&		1&0&0&1&$P$		
	\\ \hline
	Stock.StockId				&		1&0&0&1&$P$	
	\\ \hline
	Stock.ItemId				&		1&0&0&1&$P$	
	\\ \hline
	Stock.quantity		&		0&1&0&1&$\textbf{A}$		
	\\ \hline
	Account.debit				&		0&0&1&0&$\textbf{A}$		
	\\ \hline
	Account.credit				&		0&0&1&0&$\textbf{A}$		
	\\ \hline
	Account.limit				&		0&0&1&0&$\textbf{A}$		
	\\ \hline
	Item.name				&		1&0&0&1&$P$
	\\ \hline
	Item.unit				&		1&0&0&1&$P$
	\\ \hline
	Item.price			&		0&0&1&0&$\textbf{A}$
	\\ \hline
	
    \end{tabular}

\label{table:example_customer_order_first_task}
\end{table}

The ambiguities $A$ of Example \ref{example:classifiy_data_items_in_O_and_P},
see class $A$ in Table \ref{table:example_customer_order_first_task}, highlight
that classes $O$ and $P$ and their properties are not sufficient.
Particularly, hot spot items such as \textit{Stock.quantity} would benefit from
a CC mechanism that allows many winners and resolves the drawbacks of OCC and
PCC.

Laux and Lessner \cite{Laux2009} propose the usage of a mechanism that
reconciles conflicts --class $R$--. Their approach is an optimistic variant of
O'Neil's \cite{O'Neil1986} Transactional Escrow Method (TEM). Both
approaches exploit the commutativity of write operations. If operations commute,
it is irrelevant which operation is applied first as long as the final state can
be calculated (see \cite{Laux2009} \cite{Lessner2014} for further details) and no
constraint is violated.

Unlike TEM, the reconciliation mechanism requires a dependency function.
Consider, for example, two transactions that update an account and both read an
initial amount of 10\EUR, one credits in 20\EUR and the other debits 10\EUR.
Once both have committed, it is relevant that no constraint was violated at any
time and the final amount has to be 20\EUR.
Usually, a database would write the new state for each transaction causing a
lost update. A dependency function would actually add or subtract the amount
(the delta!) and would always take the latest state as input. In other words,
reconciliation replays the operation in case of a conflict.
However, this is only possible if no further user input is required. In the
example above this means the user wants to credit 20\EUR (or debit 10 \EUR)
independent of the account's amount as long as no constraint is violated!
Another requirement is that each dependency function has to be compensatable
(see also \cite{Lessner2014}).

The reconciliation mechanism \cite{Laux2009} follows a
``first-n-committers-win'' semantics and the number of winners $n$ is solely
determined by constraints. The correctness of the mechanism is proven in
\cite{Laux2009}, which also introduces ``Escrow Serializability'', a notion for
semantic correctness.

TEM grants guarantees to transactions during their read-phase. For example, a
reservation system is able to grant guarantees to a transaction about the
desired number of tickets as long as tickets are available. The consequence is
that transactions need to know their desired update in advance (see
\cite{O'Neil1986} for further details).

Whereas TEM \cite{O'Neil1986} is pessimistic (constraint validation during the
read phase) and works for numerical data only, Reconciliation \cite{Laux2009} is
optimistic (constraint validation during the write phase) and works for any data
as long as a dependency function is known. The proof that $E$, like $R$, is
escrow serializable can be found in \cite{Lessner2014}.

The decision if an item is member of $R$ or $E$ is based on the following
properties:
\begin{enumerate}
\item $con$: Does a constraint exist for this data item?
\item $num$: Is the type of the data item numeric?
\item $com$: Are operations on this data item commutative?
\item $dep$: Is a dependency function known for an operation modifying the data item?
\item $in$: Is user input independence given for an operation modifying the data item?
\item $gua$: Is a guarantee needed that a proposed modification will succeed?
\end{enumerate}

\begin{myrule}Derivation of CC classes for data item $x$
\label{rule:premises_CC_class}
\begin{enumerate}
\item $ow \rightarrow$ classify $x$ in $P$ (identify $P$).
\item $\neg ow \wedge mr \rightarrow$ classify $x$ in $O$ (identify $O$).
\item all other combinations of $ow$ and $mr$: classify $x$ in $A$ (ambiguity).
\item $com \rightarrow$ classify $x$ in $E \vee R$
\begin{enumerate}
\item $(con \wedge num \wedge com \wedge gua) \rightarrow$ classify $x$ in $E$
(identify $E$).
\item $(in \wedge dep \wedge com) \rightarrow$ classify $x \in R$ (identify
$R$).
\end{enumerate}
\item $x \in A \rightarrow$ item $x$ will be eventually
in $O$.
\end{enumerate}
\end{myrule}

\begin{example}[Classification of data items in $R$ and $E$]
\label{example:classifiy_data_items_in_R_and_E}

The ambiguities of Table \ref{table:example_customer_order_first_task} are the
input for this example. Table \ref{table:example_customer_order_second_task}
shows the result of the classification of these ambiguities.

\noindent \verb+Stock.quantity+ has a constraint $value>0$ and is numeric. The
dependency function $dep$ is known too. As stated above, a dependency function
performs a context dependent write. For example, dependency function $d$ would
be $d(x, xread, xnew)=x+(xnew-xread)$. User input independence $in$ is not
given. If placing the order fails at the end, a replay would also fail. So,
class $R$ is not an option. Since an order requires a guarantee that the
requested amount of items remains available, Rule~ \ref{rule:premises_CC_class}, 4a) applies.

\verb+Account.credit+ and \verb+Account.debit+ are classified as $R$.
Property $dep$ is known, because operations are either additions or
subtractions. Property $in$ is given, because the account has to be updated if
the order is placed and no constraint is violated. As the updates follow a
dependency function they can be reconciled and should not raise an exception.
Again, only a constraint violation such as an overdraft can cause the abort.
Rule~ \ref{rule:premises_CC_class}, 4b) applies.

\verb+Item.price+ depends on a variety of parameters including the last price
itself. As a result, a price update might not be commutative. \verb+Item.price+
remains ambiguous and remains in $O$, because $O$ is the default class. Rule~
\ref{rule:premises_CC_class}, 5) applies.
 
\end{example}

\begin{table}
\centering
\caption{Illustrative classification of ambiguities of Example
\ref{example:classifiy_data_items_in_O_and_P}.}
\label{table:example_customer_order_second_task}
\centering
\resizebox{\columnwidth}{!}{%
    \begin{tabular}{| c | c | c | c | c | c | c | c |}
	    \cline{1-8}
	    $x$ & $con$ & $com$ & $num$ & $dep$ & $in$ & $gua$ & CC class 
	    \\ \hline
		Stock.quantity				&		1&1&1&1&0&1&$E$	
		\\ \hline
		Account.credit				&		1&1&1&1&1&0&$R$	
		\\ \hline
		Account.debit				&		1&1&1&1&1&0&$R$	
		\\ \hline
		Item.price				&		0&0&1&1&0&0&$O$
		\\ \hline
	\end{tabular}%
}
\end{table}

\section{Correctness}
\label{sec:correctness}
A transaction potentially runs under four different CC mechanisms. Due to the CC
classes' individual semantics, each class has a different notion for a conflict,
too. In any case, two read operations are never in conflict because read operations do not alter the database state and hence are commutative \cite{Kifer2005}.

Usually, a conflict is given if two operations access the same data item and the
corresponding transaction overlap in their execution time, and at least one
operation writes the data item \cite{JimGray.1993}. Whereas for $O$ and $P$ this is a
correct definition of a conflict, for $R$ and $E$ it is not, because both can
resolve certain write conflicts. The resolution of conflicts is a key aspect and
advantage of SCC, and SCC questions the seriousness of a conflict. In other
words, the meaning of a read-write or write-write conflict is interpreted. For
$R$ and $E$ only a constraint violation is a conflict. Moreover, the state read
by an operation is assumed to be irrelevant, otherwise commutativity is not
given. It follows that any final serialization graph $SG-R$ and $SG-E$ for class
$R$ and $E$ is non-cyclic because potential conflicts are reconciled (see
\cite{Lessner2014} for a thorough discussion).

For $P$, the common definition of a conflict is correct. If a transaction wants to
modify item $p$ (let $p \in P$), it has to acquire a lock on $p$ during its
read-phase to become the exclusive owner. If not, the transaction does a blind
write, which is disallowed according to Definition
\ref{def:disconnected_transaction}. Hence, every write in $P$ cannot encounter a
concurrent write or read, because if a transaction writes $p$ it has to be the
exclusive owner of $P$.

Consider the following (incorrect) schedule, for example ($disc_i$ and $disc_j$ denote
the disconnect phase of transaction $i$ (resp. $j$) and let $o \in O$ and $p \in P$):
\begin{multline}
r_i(o), r_j(p), r_j(o), disc_j, w_j(o), c_j, r_i(p), disc_i, \\ w_i(p), c_i
\label{eq:cycleSchedule}
\end{multline}

In this schedule transaction $i$ reads $o$ before $j$ modifies $o$ and transaction $j$ reads $p$ ($r_j(p)$) before $i$ writes $p$ ($w_i(p)$). Usually, the ordering of transaction operations are visualized by a precedence graph as in Figure \ref{fig:cycleSG}. 

\begin{definition_}[Serialization Graph (SG)]
Let $S$ be a schedule of transactions. The Serialization Graph (aka Conflict Graph) is a precedence graph where each node represents a transaction and each directed edge between two transactions represents a precedence of conflicting operations \cite{GerhardWeikum.2002} \cite{Garcia-Molina2009} on a data item.
\end{definition_}

It is well known that a transaction schedule is conflict serializable if and only if the SG is acyclic \cite{GerhardWeikum.2002} \cite{AbrahamSilberschatz2011}. If the SG of a transaction schedule includes a cycle then no equivalent serial schedule exists and, therefore, this schedule is not serializable \cite{GerhardWeikum.2002}.  

The above Schedule \ref{eq:cycleSchedule} leads to the following cyclic SG of Figure \ref{fig:cycleSG}.

\begin{figure}[htbp] \centering
\includegraphics[width=1.5in]{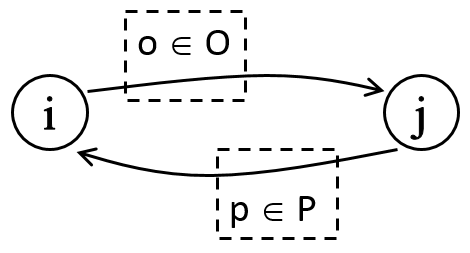}
\caption{The cyclic serialization graph from Schedule (\ref{eq:cycleSchedule}).}
\label{fig:cycleSG}
\end{figure}

Transaction $i$ precedes $j$ in class $O$ and $j$ precedes $i$ in $P$. Having
opposite orders, i.e., $i \rightarrow j$ in one, but $j \rightarrow i$ in
another class violates serializability, because globally $i$ precedes $j$, which
in turn precedes $i$.

A transaction that reads a data item in $O$ has to validate the value at write-time, even if the write is only for an item $p \in P$.  The operation $w_i(p)$ causes a
validation failure on item $o$ because transaction $i$ has read a value of $o$ that transaction $j$ has meanwhile updated. This is a conflict between transactions $i$
and $j$ in $O$ and produces a validation failure. Commit $c_i$ is wrong in the schedule above and would never
happen in \PeFS. Hence, the above schedule looks as follows in \PeFS:
\begin{multline}
r_i(o), r_j(p), r_j(o), disc_j, w_j(o), c_j, r_i(p), disc_i, \\ w_i(p), a_i
\end{multline}
Even a deadlock in $P$ cannot create a cyclic graph between $O$ and $P$, because
at least a write is required to create a conflict in $P$. However, since all
deadlocks can only happen during the read phase of a transaction, no conflict cycle
involving a deadlock can happen in $P$.

Based on these initial findings it is possible to state Theorem
\ref{theorem:globalConf}. The corresponding proof  exploits
that for $R$, $P$, and $E$ the corresponding serialization graphs are
non-cyclic.

\begin{theorem}
\label{theorem:globalConf}
Let $SG-G$ be the global serialization graph, which is the union of $SG-O$,
$SG-R$, $SG-P$, and $SG-E$. The global serialization graph $SG-G$ is non-cyclic
if $SG-O$ is non-cyclic.
\end{theorem}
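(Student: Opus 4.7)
The plan is to argue by contradiction. Suppose that $SG-O$ is acyclic but that $SG-G$ contains a cycle $C = T_{i_1} \to T_{i_2} \to \cdots \to T_{i_k} \to T_{i_1}$. Since each data item is classified in exactly one of $O$, $R$, $P$, $E$, every edge of $SG-G$ arises from a conflict on a data item belonging to exactly one of the four classes, and therefore lives in exactly one of the subgraphs $SG-O$, $SG-R$, $SG-P$, $SG-E$. Because each of these subgraphs is individually acyclic (for $O$ by hypothesis, for $R$ and $E$ by the escrow serializability argument of Laux and Lessner, and for $P$ because every write is preceded by the acquisition of an exclusive lock already during the read phase, so that no write in $P$ can meet a concurrent conflicting operation), the cycle $C$ must contain edges from at least two different classes.

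To derive the contradiction I would introduce the commit-time order $<_c$ on the committed transactions and show that every edge of $SG-G$ respects $<_c$. For $SG-P$ this is the easiest step: two transactions that both access $p \in P$ are serialised by the exclusive lock, and since the lock is held until commit, the predecessor must commit before the successor even starts its read phase on $p$. For $SG-R$ and $SG-E$, escrow serializability guarantees that any admitted edge corresponds to the serial order produced by the reconciliation (respectively escrow) mechanism, which is itself consistent with commit time. For $SG-O$ the crucial observation is the one motivated by Schedule~(\ref{eq:cycleSchedule}) just above: an $O$-edge between two committed transactions cannot invert commit order, because any $r_i(o)$ followed by a committed $w_j(o)$ forces $T_i$ to fail its validation at $c_i$; hence $T_i$ and $T_j$ can coexist as committed transactions linked by an $O$-edge only if $T_i <_c T_j$.

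Once every edge of $SG-G$ is shown to point along the total order $<_c$, the existence of the cycle $C$ becomes impossible, contradicting the assumption. The main obstacle I foresee is the treatment of the read--write (anti-dependency) edges in $SG-O$: a textbook Snapshot Isolation implementation does not validate reads and could in principle admit such edges opposite to commit order, producing write-skew anomalies. Here I would rely explicitly on the read-validation policy attached to class $O$ and illustrated by the example preceding the theorem; that policy is precisely what converts the hypothesis ``$SG-O$ is acyclic'' into the stronger statement that $O$-edges between committed transactions go along $<_c$, and without it the mere acyclicity of $SG-O$ would not be enough to exclude cross-class cycles.
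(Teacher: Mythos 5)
Your proposal is correct, but it takes a genuinely different route from the paper's own argument. The paper's proof is local and pairwise: it fixes an edge $i \rightarrow j$ in $SG$-$O$ and argues that the reverse edge $j \rightarrow i$ cannot arise in $SG$-$P$ because the exclusive read-phase locks of class $P$ prevent any concurrent access, while $SG$-$R$ and $SG$-$E$ are dismissed outright because reconciliation leaves them acyclic; the contradiction is thus derived only for an order inversion between two transactions across two classes. Your proof instead constructs a single global witness order --- the commit order $<_c$ --- and shows that every edge of every class subgraph is consistent with it, which immediately rules out cycles of \emph{any} length threading through any mix of classes. This is strictly more complete: the paper's pairwise argument does not explicitly cover a cycle such as $T_1 \rightarrow T_2$ in $O$, $T_2 \rightarrow T_3$ in $P$, $T_3 \rightarrow T_1$ in $O$, whereas your embedding into $<_c$ does. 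You also make explicit something the paper only establishes informally via the example schedule preceding the theorem, namely that the read-validation policy of class $O$ (validate every read value at write time, aborting on a stale read) is what forces $O$-edges between committed transactions to follow commit order; without it, plain Snapshot Isolation admits anti-dependency edges against $<_c$, and the hypothesis ``$SG$-$O$ is acyclic'' alone would not suffice. One residual caveat applies to both your proof and the paper's: transactions labeled read-only do not validate, so an anti-dependency edge from a read-only transaction could in principle point against commit order; neither argument addresses this case, though it does not affect the update-transaction scenario the theorem is aimed at.
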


\begin{proof}[Proof by contradiction]
\label{proof:globalSer}

Given that $ta_i$ is serialized before $ta_j$ $(i \rightarrow j)$ in $SG-O$.
In $P$, no other transaction can access an item in $P$ if transaction $ta_i$ has
read this item. This is the consequence of x-locks during the read-phase used in
class P.
The same argument applies to $ta_j$ as well and it is impossible to have a
serialization order $j \rightarrow i$ in $P$. Since $i$ and $j$ can be
arbitrarily changed there is a contradiction if $i \rightarrow j$ exists in one,
and $j \rightarrow i$ in another class. $SG-R$ and $SG-E$ are negligible because
any conflict is finally reconciled and both serialization graphs are non-cyclic.
\end{proof}

\begin{corollary}
\label{cor:globalSer}
$SG-O$ sets the global serialization order for $P$.
\end{corollary}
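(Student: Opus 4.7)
The plan is to derive the corollary as an almost immediate consequence of Theorem \ref{theorem:globalConf}, by showing that any serialization edge in $SG-P$ between two transactions that also appear in $SG-O$ must be oriented the same way as in $SG-O$.

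First I would fix two arbitrary transactions $ta_i$ and $ta_j$ that touch data items in both class $O$ and class $P$, and assume without loss of generality that $i \rightarrow j$ in $SG-O$. The goal is to rule out the edge $j \rightarrow i$ in $SG-P$. This is precisely the situation examined in the proof of Theorem~\ref{theorem:globalConf}: combining $i \rightarrow j$ in $SG-O$ with $j \rightarrow i$ in $SG-P$ produces a cycle in $SG-G$, contradicting the theorem. Hence the only admissible orientation in $SG-P$ between $i$ and $j$ is $i \rightarrow j$ as well.

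Next I would handle the case in which $ta_i$ and $ta_j$ are not directly connected in $SG-O$ but are connected through a chain of conflicts in other classes. Using the acyclicity of $SG-R$ and $SG-E$ (guaranteed by reconciliation, as already noted in the excerpt) together with the exclusivity property of $P$ (x-locks at read time forbid any concurrent access once $ta_i$ has claimed the item), one shows that the effective ordering induced on transactions by $P$-conflicts is compatible with any ordering forced by $O$-conflicts. Consequently, whenever $SG-P$ imposes an order on two transactions, that order either agrees with $SG-O$ or is unconstrained by $SG-O$, so the transitive closure of $SG-O$ is the only source of global ordering constraints that $SG-P$ must respect.

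The principal obstacle I foresee is mainly a conceptual one: clarifying what ``sets the global serialization order'' means formally. A strict reading would demand that every pair of transactions connected in $SG-P$ is already connected in $SG-O$, which is too strong; the correct reading, and the one the theorem supports, is that $SG-P$ never contradicts the order in $SG-O$, so when a serial schedule is reconstructed from $SG-G$, the order of $SG-O$ can be extended consistently to cover the $P$ conflicts. Once this reading is adopted, the corollary reduces to the contradiction argument above and no further machinery is needed.
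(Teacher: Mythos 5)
Your proposal is correct and follows essentially the same route as the paper: the paper gives no separate proof for the corollary, treating it as an immediate consequence of the theorem's contradiction argument (x-locks at read time make $j \rightarrow i$ in $P$ impossible once $i \rightarrow j$ holds in $SG$-$O$), which is exactly the core of your first paragraph. Your additional remarks on indirect chains through $R$ and $E$ and on the formal reading of ``sets the order'' elaborate points the paper leaves implicit but do not change the argument.
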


If a $ta$ does not modify data in $O$, then $P$ sets the order. If a $ta$ does
not modify data in $P$, then $R$ sets the order, because it is prone to
validation conflicts as opposed to $E$ that already has a guarantee to succeed.

\section{Prototype Reference Implementation of O\textbar R\textbar P\textbar E}
\label{sec:prototypical_ref_impl}

The prototype of \Pe is not a full database system. From a fully operational database the backup and recovery functions are missing. Both functions do not functionally influence the CC mechanism. There is only a negative effect on the performance during backup or recovery. This applies in a similar way for any database management system with a single CC mechanism.   

It was implemented using the
JAVA programming language and Figure \ref{fig:orpe_arch_prototype} illustrates
its architecture. A client API provides access to the data and depending on the
operation's type, read or write, the operation is executed by a dedicated pool.
Pools ``Reads'' and ``Writes'' represent an read- and write-lane. 
In addition, a pool to handle the termination (commit and abort) has been implemented. 
Pools' reads and writes handle all incoming and outgoing operations and the classification has been placed directly into the index. 
Depending on an item's classification the corresponding CC mechanism is plugged in. 
This placement allows to decide about the CC mechanism with a single read operation, which imposes an negligible overhead. 
Once an item has been read or written, the additional pools' ``read-callback'' and
``write-callback'' deliver the results back to the clients. 
A Pool WFG (Wait-for-Graph) is used to handle access to the WFG. %hä??
Deadlocks may occur during the read-phase of a transaction if the transaction accesses data items in class $P$. 
Deadlocks can only occur in class $P$ during the read-phase, because lock acquisition is not globally ordered.

Having separate pools and callbacks to handle incoming and outgoing operations
means that the prototype supports disconnected transactions, because the entire
communication is asynchronous. Figure \ref{fig:orpe_arch_msg_flow} illustrates
the message flow within the prototype. A read operation is passed to the
``Reads'' pool. Each read is executed asynchronously and the complete read set
is sent back to the client via a dedicated callback pool. To support
asynchronous writes, a write operation is passed to the ``Writes'' pool and if
all writes have been applied the write set is sent back to the client. Clients
always sent their complete write-set.

Data is kept solely in memory and no data is written to disk unless
the operating system needs to swap data to disk due to memory limitations. The only
output to disk is to write logging events that are used for performance
evaluation. Other functionality that has been implemented includes:

\begin{itemize}
\item CC mechanisms $O$, $R$, $P$ and $E$,
\item The prototype supports constraints,
\item The prototype supports item selects, range-selects, updates, and inserts. The
deletion of an item is implemented as update that invalidates a data item.
\item A WFG implementation.
\end{itemize}

\begin{figure}[htbp] \centering
\includegraphics[width=3in]{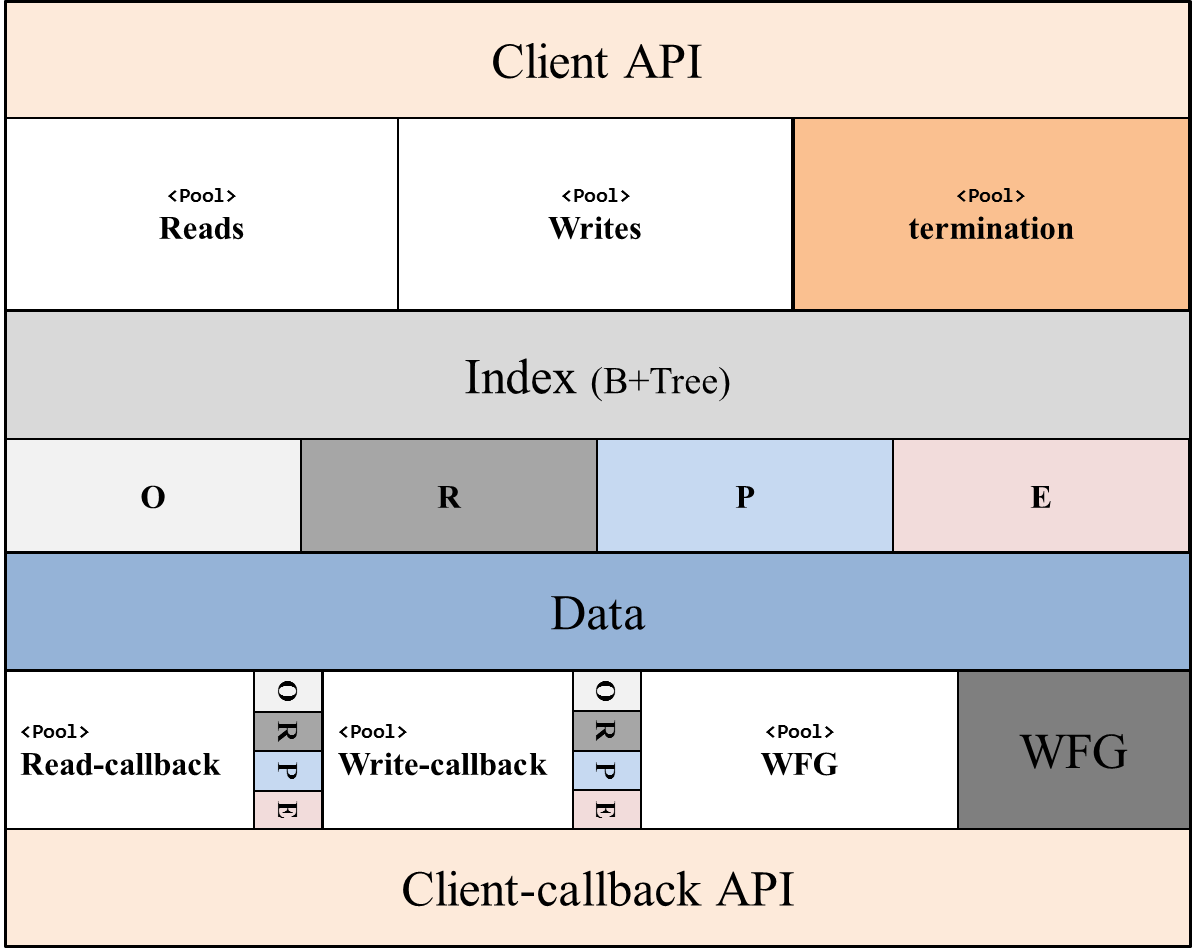}
\caption{Architecture of the prototype.}
\label{fig:orpe_arch_prototype}
\end{figure}

\begin{figure}[htbp] \centering
\includegraphics[width=3in]{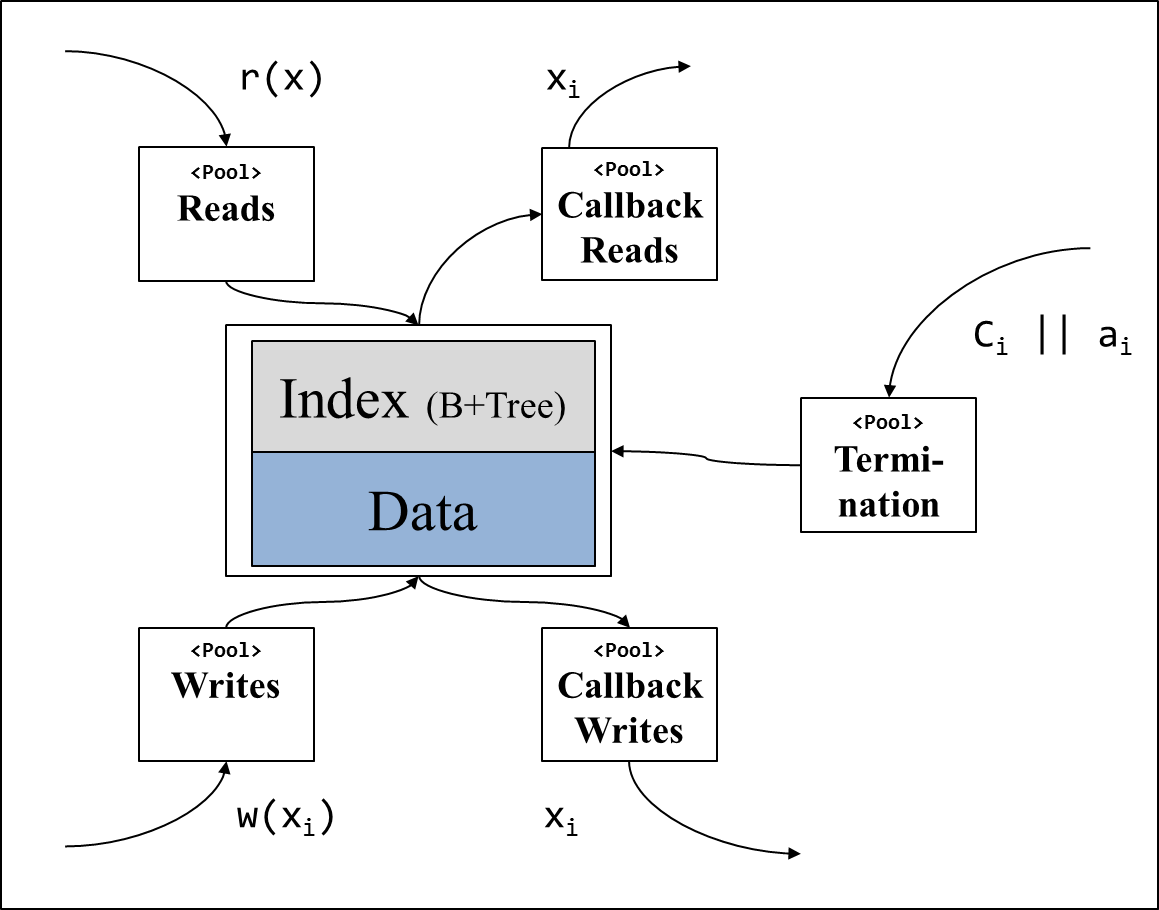}
\caption{Message flow of the prototype.}
\label{fig:orpe_arch_msg_flow}
\end{figure}

\section{Performance Study with Static Data Class Assignment}
\label{sec:performance_study}
The performance study has been carried out based on the prototype presented in
the previous section (Section \ref{sec:prototypical_ref_impl}). As benchmark,
the TPC-C++ benchmark \cite{Cahill2009} has been chosen, because we also conducted a
study comparing \Pe with Serializable SI, which is beyond the scope of this
paper.

The data used for this study is similar to those of Examples \ref{example:classifiy_data_items_in_O_and_P} and
\ref{example:classifiy_data_items_in_R_and_E}. Each data item was statically  assigned to a CC-Class as shown in Table \ref{tab:tpc_orpe_classification}. Aspects of a dynamic assignment and its performance effects will be studied in the next section.

The performance study measures the response-time (resp. - time), the abort rate
(ab-rate), the commits per second, and the degree of concurrency (deg. conc.). The
degree of concurrency is the quotient of the serial estimated execution time over the elapsed time of the experiment.
In addition, the arrival rate $\lambda$ of new transactions has been varied to
be set to the optimum (minimized abort rate and response time, maximized degree of
concurrency). This optimum $\lambda$ has been taken to conduct fair
and calibrated comparisons. Each experiment has been repeated three times and
the mean value is reported. Values refer to the execution of a transaction mix
--deck-- (42 New Order-, 42 Payment-, 4 Delivery-, 4 Credit check-, 4 Update
Stock Level-, and 4 Read Stock Level - transactions see \cite{Cahill2009} \cite{TPCC.2010} \cite{Lessner2014}).

\begin{table}%[]
  \centering
  \caption{TPC-C: classification of data items.}
    \begin{tabular}{rrrr}
    \toprule
    \textbf{Item} & \textbf{CC Class} & \textbf{operation} \\
    \midrule
    Customer & $P$     & read \\
    CustomerCredit & $P$     & update \\
    CustomerBalance & $R$     & read \\
    Customer & $P$     & read \\
    CustomerBalance & $R$     & update \\
    Customer & $P$     & read \\
    CustomerCredit & $P$     & read \\
    StockQuantity & $E$     & update \\
	Customer & $P$     & read \\
    CustomerBalance & $R$     & update \\
    WarehouseYTD & $R$     & update \\
    DistrictYTD & $R$     & update \\
	StockQuantity & $E$     & read only \\
	StockQuantity & $E$     & update \\
    \bottomrule
    \end{tabular}%
  \label{tab:tpc_orpe_classification}%
\end{table}%

Figure \ref{fig:tpc_si} illustrates the abort rate and degree of concurrency
for SI under full contention and shows the drawbacks of
optimistic SI: the higher the number of concurrent transactions, the higher the abort rate.
Also, the system starts thrashing if the degree of concurrency drops below one,
which is the point where a serial execution outperforms a concurrent. Table
\ref{tab:tpc_orpe_measured_values} shows that for SI and \Pe with the same $\lambda$ (tests \#1-6 and \#10-15) the response-time increases
with larger $\lambda$, which is expected and normal behavior. 
The direct comparison reveals that \Pe has a $3 - 38$ times better response time, which shows that SI is over-strained for a workload of $\lambda \geq 200$. 
For $\lambda = 1000$ tas/sec the response time is about 3 times higher for SI and the degree of concurrency is only half compared to \PeFS.  
A good degree of concurrency with
a low abort rate is given by $\lambda = 133$ (see Table
\ref{tab:tpc_orpe_measured_values} \#3).

\begin{figure}[htbp]
\centering%
\includegraphics[width=3.4 in]{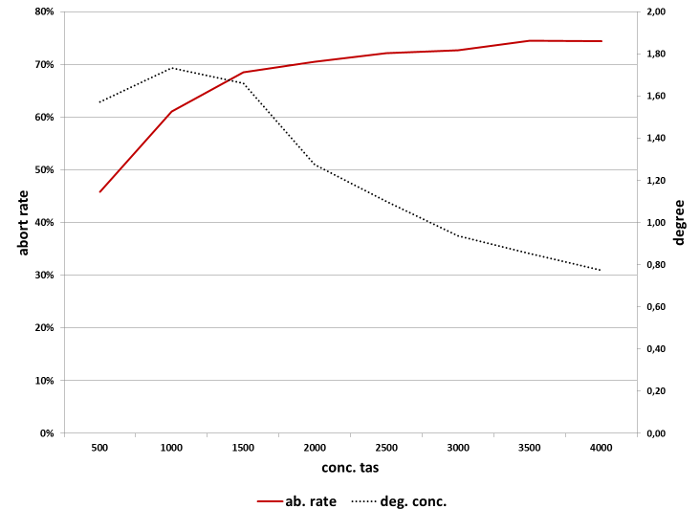}
\caption{TPC-C++, optimistic SI (class $O$), abort rate and degree of
concurrency.}
\label{fig:tpc_si}
\end{figure}

Figure \ref{fig:tpc_orpe} shows the response-time and degree of concurrency for
\Pe for increasing $\lambda$. Unlike SI, \Pe has no aborts caused by
serialization or validation conflicts due to the classification of hot-spot data
items in $R$ or $E$, which prevents $ww$-conflicts. As shown by Figure
\ref{fig:tpc_orpe}, \Pe has its best degree with $\lambda=1000$ transactions per
second achieving 227 commits per second (see Table
\ref{tab:tpc_orpe_measured_values}, \#15). 

\begin{figure}[htbp]
\centering%
\includegraphics[width=3.4 in]{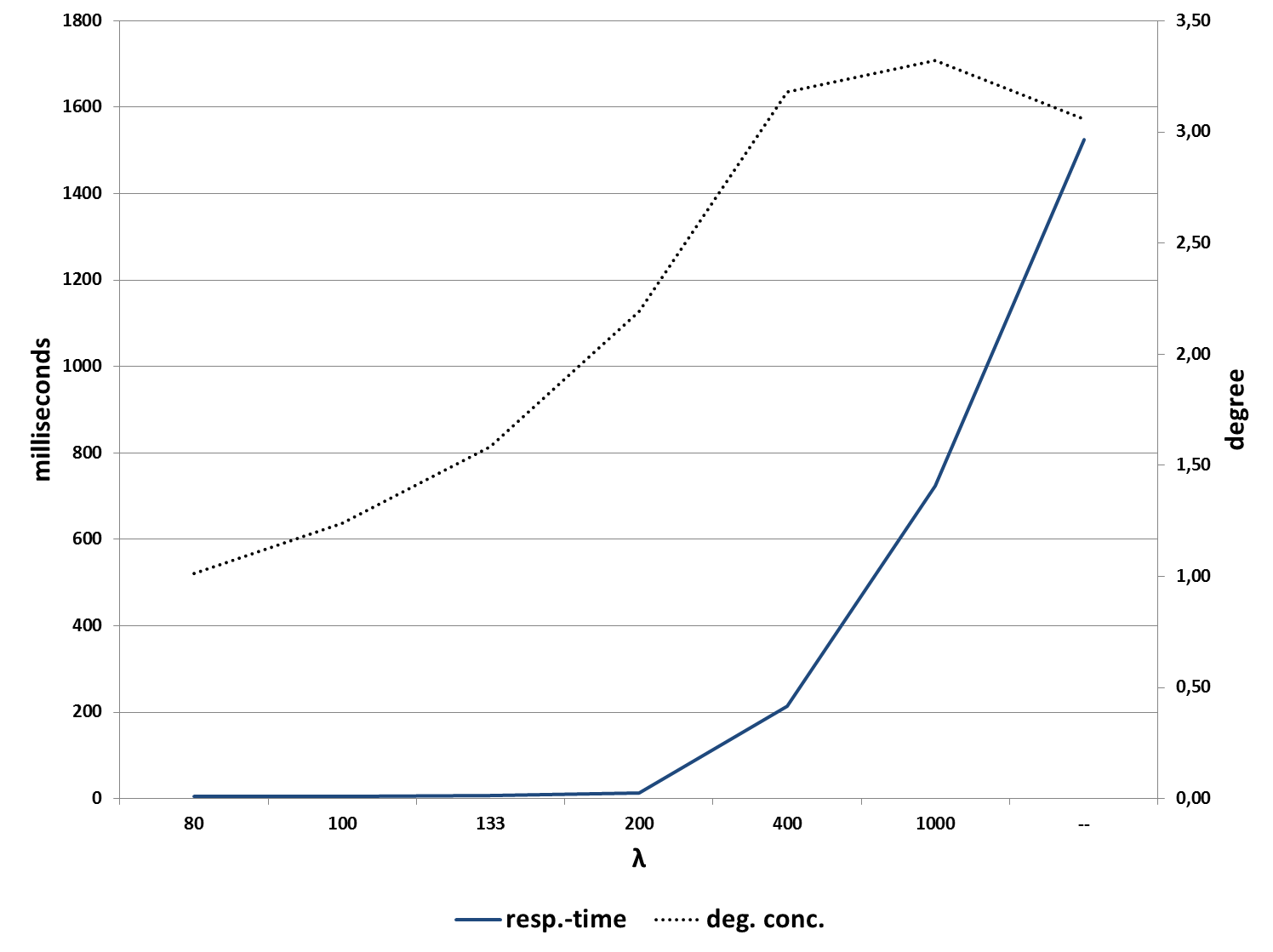}
\caption{Response time and degree of concurrency for increasing $\lambda$ for
\Pe.}
\label{fig:tpc_orpe}
\end{figure}

\begin{figure}[ht]
\includegraphics[width=3.4 in]{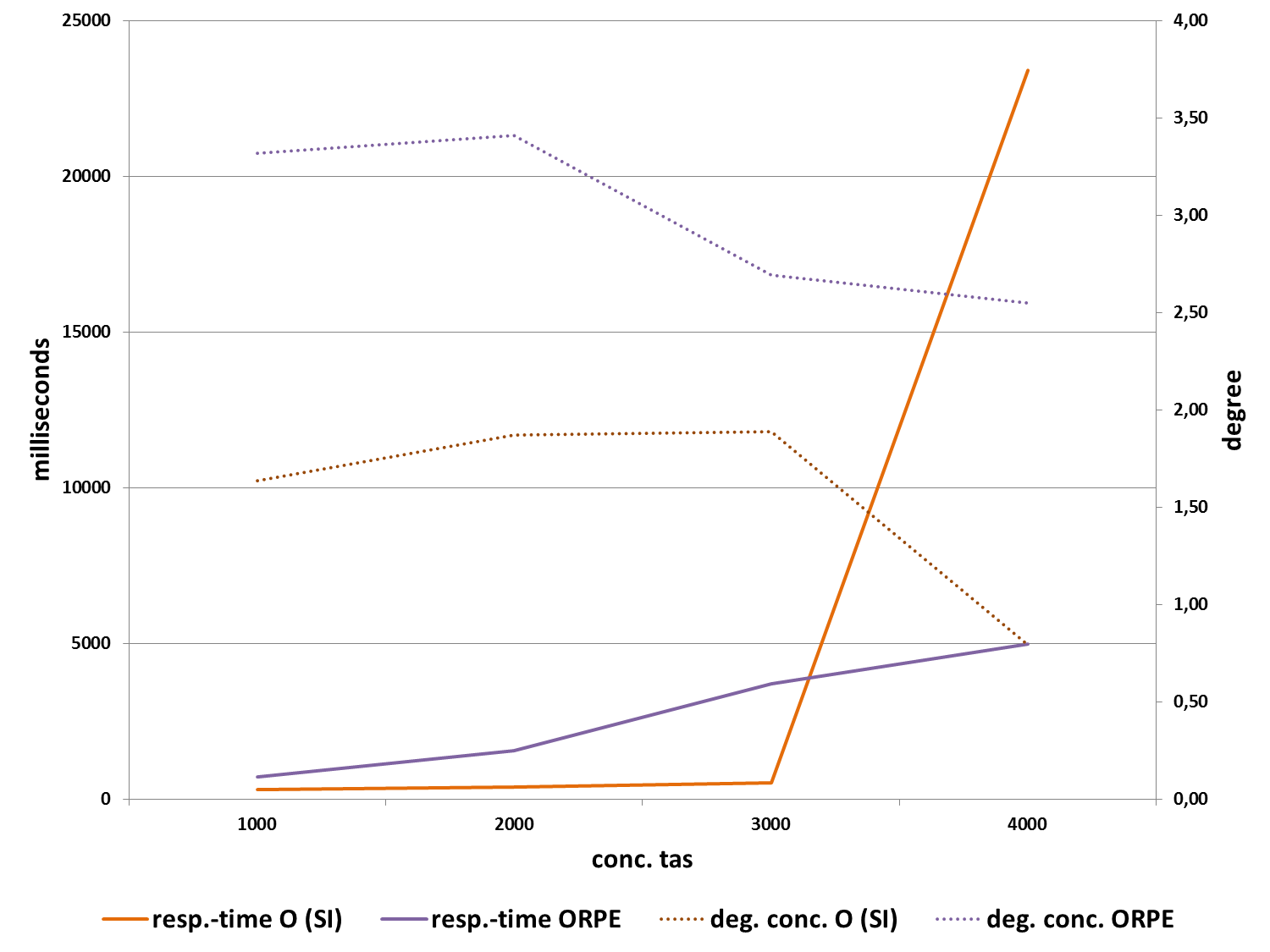}
\caption{TPC-C++, SI and \Pe: response-time and degree of concurrency for
$\lambda=133$ (SI) and $\lambda=1000$ (\Pe).}
\label{fig:tpc_o_and_orpe}
\end{figure}

The comparison of \Pe and SI uses $\lambda=133$ (Table
\ref{tab:tpc_orpe_measured_values} \#3, and \#7-9) for SI and $\lambda=1000$
(Table \ref{tab:tpc_orpe_measured_values} \#15-18) for \Pe. For SI, $\lambda=133$
was considered as being the best trade-off with respect to the degree of
concurrency, $\lambda=1000$ was considered as being the best trade-off for
\PeFS. 

Figure \ref{fig:tpc_o_and_orpe} illustrates the degree and the
response-time for data of class $O$ with SI and \Pe if both use the $\lambda$ which reflect the best trade-off.
As the figure shows, SI has a better response-time for 1000, 2000, and 3000
concurrent transactions, but then suddenly undergoes thrashing and the
response-time grows exponentially. However, \Pe shows a moderate and stable
increase of the response-time even for 4000 concurrent transactions. 

With a workload of $2000$ transactions the degree of concurrency is $3.41$ for \Pe versus $1.87$ for SI. The average response time is only $388$ msec for SI and $1551$ msec for \PeFS.
It would be wrong to conclude that SI has a better performance than \Pe because for a comparison $\lambda$ has to be taken into account. In the test \Pe had a 7.5 times higher transaction arrival rate than SI ($\lambda=1000$ as opposed to $\lambda=133$ for SI). 
At $4000$ concurrent transactions \Pe outperforms SI in terms of response time by a factor of $3.7$ (see Figure \ref{fig:tpc_o_and_orpe}) and the degree of concurrency is $2.6$ times better.
Hence, under high contention \Pe has the lowest abort rate and considering the trade-off between concurrency and response time, \Pe outperforms SI significantly. Furthermore, its abort rate is nearly independent of the contention.

\begin{table}
  \caption{Measured values of experiments \#1-18.}
  \centering
    \begin{tabular}{rrrrrrrrrrrrr}
    \toprule
    \# & tas & $\lambda$ & resp.-time & ab. rate & commits & deg.\\
    &  &  & & & /second & conc. \\
    \midrule
    SI\\
    \midrule
    1   & 1000  & 80    & 43    & 2\%   & 71    & 1,39 \\
    2   & 1000  & 100   & 84    & 3\%   & 80    & 1,57 \\
    3   & 1000  & 133   & 309   & 5\%   & 82    & 1,63 \\
    4   & 1000  & 200   & 1640  & 20\%  & 62    & 1,50 \\
    5   & 1000  & 400   & 2091  & 26\%  & 61    & 1,57 \\
    6   & 1000  & 1000  & 2464  & 27\%  & 62    & 1,61 \\
    7   & 2000  & 133   & 388   & 9\%   & 90    & 1,87 \\
    8   & 3000  & 133   & 522   & 8\%   & 91    & 1,89 \\
    9   & 4000  & 133   & 23416 & 46\%  & 22    & 0,79 \\
    \midrule
    \Pe \\
    \midrule
    10   & 1000  & 80    & 5     & 4\%   & 69   & 1,01 \\
    11   & 1000  & 100   & 5     & 4\%   & 85   & 1,24 \\
    12   & 1000  & 133   & 8     & 4\%   & 108   & 1,58 \\
    13   & 1000  & 200   & 14    & 4\%   & 150   & 2,19 \\
    14   & 1000  & 400   & 213   & 4\%   & 217   & 3,18 \\
    15   & 1000  & 1000  & 724   & 4\%   & 227   & 3,32 \\
    16   & 2000  & 1000  & 1551  & 4\%   & 234   & 3,41 \\
    17   & 3000  & 1000  & 3704  & 4\%   & 184   & 2,69 \\
    18   & 4000  & 1000  & 4968  & 5\%   & 174   & 2,55 \\
    \bottomrule
    \end{tabular}%
  \label{tab:tpc_orpe_measured_values}%
\end{table}%

\section{Run-time Adaption}
\label{sec:runtime_adaption}
The attempt to manually classify data may finally result in ambiguous
classification where default class $O$ applies (see Rule~
\ref{rule:premises_CC_class}, 5)). But, high contention can quickly cause
performance issues for data classified in $O$. Even if class $P$ is more 
expensive, because $P$ requires locking during the read-phase it will lead to a
better performance in this situation as the locking will queue the transactions
and process them successfully.

An automatic and dynamic adaptation of the classification when transactional
load or data usage changes would make the initial classification less critical
and \Pe could choose the optimal CC-mechanism based on the current situation.

A solution for automatic run-time adaptation is presented in this section. It
re-classifies a data items of default class $O$ to class $P$ if the commit rate
drops below an adjustable threshold. With this measure the commit rate increases
again for the price of a longer response time.
When the transactional load decreases and after the commit rate exceeds the
threshold again it switches back to its original class $O$.

Data originally classified in $P$ will not be re-classified to $O$ when the load
is low. This is not feasible, because an item initially in $P$ has to remain in
$P$ due to the item's ownership semantics. An adaption at run-time that results
in $O$ would contradict the ownership semantics since a transaction would no
longer request locks during its read-phase. This is, however, mandatory to
comply with the ownership semantics (see Rule~\ref{rule:premises_CC_class}, 1)).

At a first glance, an adaption between $E \rightarrow R$ seems reasonable if the
probability of an invariant violation (\textit{PIV}) is low. It would save
additional overhead, because invariant conditions in $R$ have not to be validated at read-time, but in $E$.
However, this is only a good decision if contention is low.
To take this decision at high workload will result in a much longer response
time because the response time for class $R$ grows much faster than for class
$E$.
With high contention, the probability of constraint violations increases, but
the exact determination is application dependent.
Classifying a data item in $E$ is only justified if an aborted transaction is
more costly than to retry the transaction, i.e., the transaction needs a
guarantee to succeed which leads to class $E$ from the beginning (Rule~
\ref{rule:premises_CC_class}, 4a)).

\subsection{Adaptation Criteria}
\label{ssec:adapt_criteria}

The run-time adaptation is based on the commit rate $cr$. To measure and analyze
$cr$ a statistical model for the transactional system is necessary.
According to \cite{Kraska2009} \cite{GomezFerro2012} \cite{Osman2012}, a transactional system
is modeled as an open system whose transactional arrival rate is a Poisson
process.
The time between arrivals of transactions is assumed to be independent in
Poisson, which has the advantage that the conflict rate (the term conflict is
stated more precisely below) can be modeled around a single variable $\lambda$
that represents the number of arrivals in relation to the time window. A Poisson
process has a conflict probability density function $PC_x(X=k)$ given by
Equation (\ref{eq:poisson}):

\begin{equation}
PC_x(X=k)=\frac{\lambda^k}{k!}e^{-\lambda}
\label{eq:poisson}
\end{equation}

For example, if on average $100$ transactions arrive within one
Time Window (TW), the probability that $k=50$ transactions access item $x$ within a TW is given by Formula (\ref{eq:poisson}). 
The arrival rate $\lambda$ is
in relation to time, for example, within one second; i.e., for a transaction that accesses $x$ during that second, it means that 
the probability is $PC_x(X >= 2) = \sum_{k=2}^\infty \lambda^k/k!\,e^{-\lambda}$ to encounter other conflicting concurrent transactions.

\begin{figure}
\centering
\includegraphics[scale=0.35]{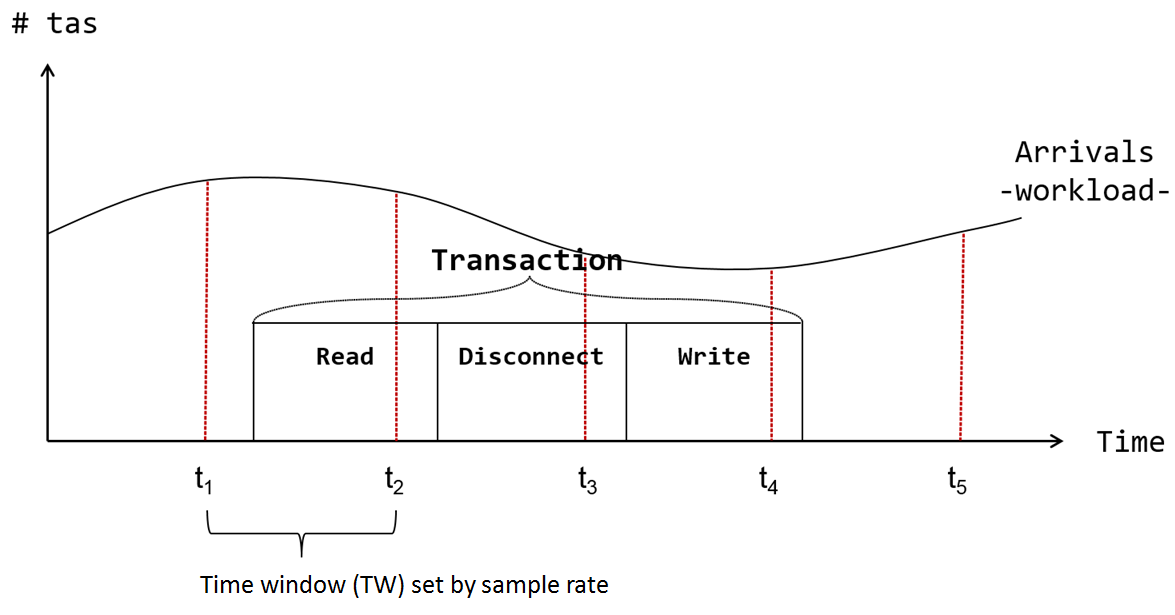}
\caption{Arrivals (workload) and time windows.}
\label{fig:lambda}
\end{figure}

Figure \ref{fig:lambda} illustrates the usage of TW as well as the arrivals
--workload-- in relation to time. The workload is, however, not constant over
the lifetime of a transaction. A constant workload ignores that the workload,
and hence, $\lambda$ might suddenly change in particular if transactions are
long running. Measuring the number of transactions terminating or committing
during a time window are means to detect and react to sudden changes in the
workload, which is an idea borrowed from \cite{Kraska2009}. The length of the TW
defines the sample rate and its sensitivity.

The commit rate $cr$ is used as indicator for the performance of the optimistic
CC-mechanism of class $O$. If the $cr$ drops below a threshold, there are more
aborts due to validation failures and that class $P$ would be a better choice
to increase $cr$.

\begin{equation}
cr := \frac{\mbox{\#committed tas}/\mbox{TW}}{(\mbox{\#terminated tas}-\mbox{\#re-class. aborts})/\mbox{TW}}
\label{eq:commit_rate}
\end{equation}

For each TW the commit rate $cr$ is calculated as fraction of the committed
transaction divided by all terminated transaction without those that were
aborted due to a re-classification.
The commit rate $cr$ is identical to the effective commit rate $cr_{\text{eff}}$
(see Definition \ref{eq:commit_rate_eff}) if no adaptation occurs.
Formula (\ref{eq:commit_rate}) is apparently insensitive to the length of the TW.
But, a longer TW tends to compute smoother $cr$ and it saves measuring overhead.
We used a TW of $100$ msec which delivered a good trade off for the prototype
implementation.

The adaptation policy is given by Rule~ \ref{rule:adapt_O_P}, which uses a
threshold $\gamma$ for the target commit rate and an hysteresis $\delta$ to
avoid constant switching (thrashing) between both classes.
When a data item is re-assigned during an active transaction, the transaction is
aborted when the change is from $O$ to $P$.
In the opposite case, the transaction can continue without conflicts, because
the write-phase will succeed since the data item is already exclusively locked
for that transaction.

\begin{myrule}General Adaptation $O \rightarrow P$
\label{rule:adapt_O_P}
\item Let $cr$ be the commit rate, $\delta$ the hysteresis, and $\gamma$ the
target commit rate. Adaptation is according to the following rules:

\begin{enumerate}
\item When $cr$ decreases and $O$ is the current class for an item $x$: If $cr <
\gamma - \delta$ then $P$ is the new classification of $x$ 

\item When $cr$
increases and $P$ is the current class for an item $x$: If $cr > \gamma + \delta$
then $O$ is the new classification of $x$. 

\item Reclassification during a transaction: \\ a) If a $ta$ reads at the time
when $O$ is the current class, but will write at a time when $P$ is the
current class, $ta$ is aborted (non-avoidable crash) to maintain consistency. \\
b) If a $ta$ reads at a time when the data item is in $P$ and writes when it is
in $O$, the success of the write is guaranteed because the data is exclusively
locked since read-time.
\end{enumerate}
\end{myrule}

Adaptation solely relies on the commit rate $cr$. The arrival rate $\lambda$ and hence
the conflict probability are not measured which would be much more difficult. 
This leverages the decision to use a Poisson distribution for the transaction arrivals.

\begin{figure*}
\centering
\includegraphics[scale=0.6]{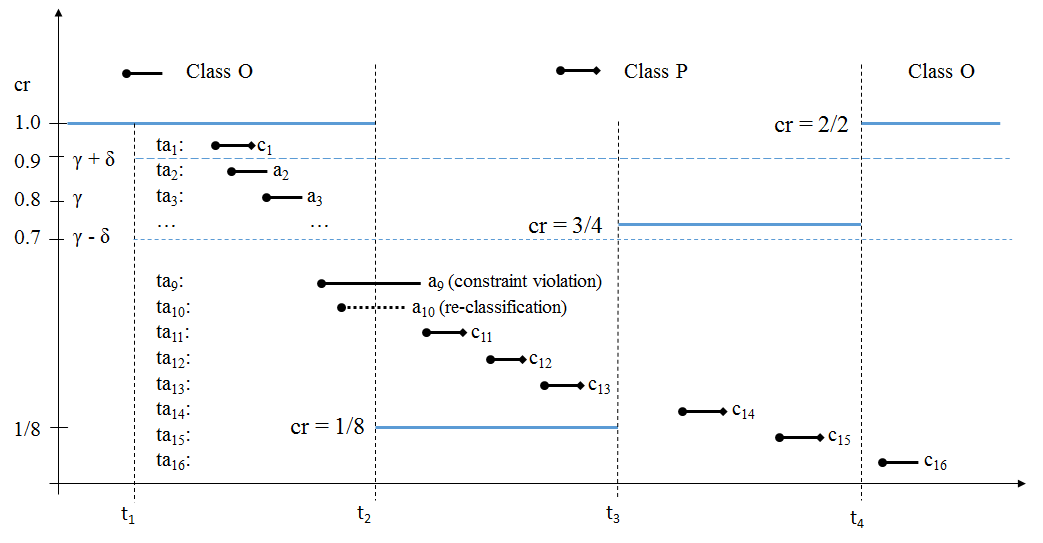}
\caption{Example run-time adaptation scenario with decreasing $cr$ in TW $(t_1,t_2)$, reclassification at $t_2$ to $P$ and increasing $cr$ in TW $(t_2,t_3)$ and $(t_3,t_4)$ and switch back to class $O$ at $t_4$.}
\label{fig:adapt_scenario}
\end{figure*}

Figure \ref{fig:adapt_scenario} illustrates how the adaptation works if
the commit rate decreases and later increases again. During the first TW ($t_2 -
t_1$) the commit rate $cr$ drops to $1/8$ because only one out of 8 transactions
was successful.
Two transactions ($ta_9, ta_{10}$) have not terminated yet.

At the end of epoch 1 the commit rate is compared to $\gamma - \delta$ and as
$cr$ is below the threshold data $x$ is re-classified to $P$.
The transaction $ta_9$ will later abort due to a constraint violation and
$ta_{10}$ has to abort because of the re-classification to $P$.
Now, for the following transactions the locking mechanism for $P$ applies. One
consequence is that $ta_{11}, ta_{12}$, and $ta_{13}$ execute mostly
sequentially.
The commit rate grows in the following TW to $3/4$, but, this is not sufficient
to switch $x$ back to class $O$.
During the third TW ($t_4 - t_3$) the commit rate rises to $cr = 2/2 > \gamma +
\delta$ and the (initial) optimistic CC (class $O$) is re-established.

The following history describes the example of Figure \ref{fig:adapt_scenario} more formally: 
\begin{align*}
H = &
\underbrace{(r_1(x),r_2(x),r_3(x), \ldots, r_{10}(x),
w_1(x), c_1}_{\text{commit rate decreases}}\\
& \underbrace{w_2(x), a_2,w_3(x),a_3,\ldots}_{\text{commit rate decreases}}, \mbox{adapt to P}, a_{10}, a_9, 
\\ &
 \underbrace{l_{11}(x), r_{11}(x), w_{11}(x), c_{11}, l_{12}(x), r_{12}(x),}_{\text{commit rate increases}}
\\ &
\underbrace{ w_{12}(x), c_{12}, l_{13}(x), r_{13}(x), w_{13}(x), c_{13} \ldots }_{\text{commit rate increases}}
\end{align*}

The history $H$ shows in the first phase 10 transactions $ta_1, ta_2, \dots, ta_{10}$ accessing $x$. They first read $x$ ($r_1(x), r_2(x),r_3(x), \ldots, r_{10}(x)$) and then try to write $x$ ($w_1(x), w_2(x), \ldots$). 
In the given scenario only $ta_1$ can commit ($c_1$), all others have to abort ($a_2, a_3, \ldots$) because too many transactions try to concurrently update $x$. 
This leads to a sudden decrease in the commit rate $cr = 1/8$ because only $ta_1$ was successful and $ta_9$ and $ta_{10}$ have not yet updated $x$, i.e., it is still pending. 
If we assume a threshold $\gamma$ of $0.8$ and an hysteresis $\delta$ of $0.1$, then $cr < \gamma - \delta$ which triggers the adaption according to Rule~ \ref{rule:adapt_O_P}, 1).

After adaptation has been
carried out, $ta_{10}$ has to abort (Rule~ \ref{rule:adapt_O_P}, 3a)) if it tries to update $x$. 
The abort $a_{10}$ appears in the history after the adaptation even though the item $x$ is now classified in $P$. Transaction $ta_{10}$ has to abort, because it has not locked $x$ before reading $x$ ($r_{10}(x)$). If $ta_{10}$ would not abort it would  risk a lost-update, because $ta_{10}$ would overwrite the last committed state since $P$ does no version validation. Even with version validation, $ta_{10}$ is very likely to abort, because the probability for a validation failure is high in this situation.

Let assume that transaction $ta_9$ accesses other data beside $x$ and validation fails due to a constraint violation. This leads to an abort of $ta_9$. The distinction of the abort reason is important here as it will be counted for the commit rate.

After the adaptation to $P$ newly arriving transactions apply a locking scheme
for data $x$ which is indicated by $l_{11}, l_{12}, \ldots$. The commit rate
increases again because transactions $ta_{11}, ta_{12}, ta_{13}$ succeed and
commit $c_{11}, c_{12}, c_{13}$. In fact, all following transaction succeed
except those which violate a constraint.

If we choose the Time Window TW to start just before $ta_{11}$ arrives the commit rate $cr$ rises with each committed transaction. Class $O$ is not reestablished at the end of this TW despite that the next $3$ transactions succeed because $cr = 3/4 \leq \gamma + \delta = 0.9$. The class assignment remains unchanged and the following TW ($t_4 - t_3$) will reestablish class $O$ because $cr = 2/2$.

The adaptation mechanism proposed in Rule~ \ref{rule:adapt_O_P} maximizes the commit rate as seen in the previous example. But due to the restrictive locking policy the response time increases as the execution tends to be serial. In the worst case, enduring contention, the growth is exponential. But, what if the maximum
response-time is limited, for example, by Service Level Agreements (SLA) and penalties
apply for exceeding the maximum acceptable response-time? 
The SLA penalties may outweigh the costs for aborts.

In this case maximizing the commit rate as only criteria is not a good strategy since it increases costs. To prevent unacceptable response times a barrier (denoted as $\beta$) is used that regulates the adaptation; i.e., once $\beta$ is reached re-classification to $O$ takes place despite a low commit rate and the abort rate starts to increase which in turn leads to shorter response times for the remaining successful transactions. The concrete value of $\beta$ is
application dependent. Its general purpose is to minimize costs, i.e., if the
abort costs are lower than the costs for exceeding the response-time, more
aborts are acceptable until the ratio turns over.

Application specific requirements that set $\beta$ are out of the paper's scope, but to allow applications to limit the adaptation, $\beta$ is incorporated in \Pe (see Rule~ \ref{rule:adapt_O_P_with_barrier}). 
Applications can now set $\beta$ to limit the response time and, at run-time, continuously
monitor and adapt the achieved commit rate as well as the response time as measured by the applications themselves. Further, applications can increase $\beta$ at run-time appropriately. This way, applications can determine their own equilibrium between commit rate and response-time.

The challenge is the estimation of the expected mean response-time $rt_{\text{est}}$, which
implies to predict the workload. As stated in the previous section, this is
complicated if not impossible in a general and dynamic way. \Pe circumvents this
problem and measures the time between a read and the corresponding write if the current classification is $P$. 
Furthermore, adaptation does no longer calculate $cr$ at the end of the current TW, instead each termination (commit and abort) triggers the adaptation. 
A useful fixed TW is difficult to choose. If TW is too short, the overhead is considerable and degrades performance. If the TW is too long the adaptation is too slow.

To estimate the future workload the terminating transaction snapshots
the lock queue's size if $P$ is the current class. The current queue
size together with the average time between read and write give a good indication for the expected workload. 
Because the transaction has to notify all waiting
transactions about the ongoing unlock and already is the current owner of the lock-queue, there is no need for further synchronization and the overhead is considerably low, but of course exists. It is a price that has to be paid to get run-time adaptation.

Finally, the number of notified transactions multiplied by the average time distance between a read and write is used as an approximation for $rt_{\text{est}}$. 
The rationale is that if $q$ transactions are waiting to execute and the mean time between read and write is \o$(mt)$ then for newly arriving transactions $rt_{\text{est}}$ 
is expected to be $rt_{\text{est}} = \o(mt) \times (q+1)$ because of the mostly sequential execution. 
Following this approach \Pe can balance commit rate and response time. 

Transaction termination triggers adaptation, however, it is important to note
that the adaptation is not executed as part of a transaction. This prevents the
situation where a failed adaptation would cause the transaction to abort, too.

\begin{myrule}Adaptation $O \rightarrow P$ with barrier 
\label{rule:adapt_O_P_with_barrier}
\item Let $cr$ be the commit rate, $\delta$ the hysteresis, $\gamma$ the target
commit rate, and $\beta$ the response time barrier. Adaptation is according to the following rules:

\begin{enumerate}
\item ($O \rightarrow P$): If $O$ is the current class for an item $x$ and $cr <
\gamma - \delta$ and $rt_{\text{est}} < \beta$ then $P$ is the new
classification of $x$.

\item ($P \rightarrow O$): If $P$ is the current class for an item $x$ and $cr$ is low ($cr < \gamma - \delta$) and $rt_{\text{est}} > \beta$ \\ 
 then $O$ is the new classification of $x$.

\item ($P \rightarrow O$): If $P$ is the current class for an item $x$ and $cr$ is high ($cr > \gamma + \delta$) \\
 then $O$ is the new classification of $x$.

\item Reclassification during a transaction: \\ a) If a $ta$ reads at the time when $O$ is the current class, but is about to write at a time when $P$ is the current class, $ta$ is aborted (non-avoidable crash) to maintain consistency. \\ b) If a $ta$ reads at a time when the data item is in $P$ and writes when it is in $O$ the success of the write is guaranteed because the data is exclusively locked since read-time.
\end{enumerate}
\end{myrule}

Rule~ \ref{rule:adapt_O_P_with_barrier}, 1) takes care that the commit rate is sufficiently high as long as the response time is low. If the response time exceeds the limit $\beta$ and $cr$ is (still) low then Rule~ \ref{rule:adapt_O_P_with_barrier}, 2) switches back to $O$. Rule~ \ref{rule:adapt_O_P_with_barrier}, 3) ensures that when the commit rate is high the default CC-mechanism of class $O$ is chosen. For all other situations the classification remains unchanged.

Rule \ref{rule:adapt_O_P_with_barrier}, 4) is the same as before. It ensures that a reclassification can take place during ongoing transactions. Reclassification is now triggered by two parameters, the commit rate $cr$ and the mean response time $mrt$. 

\section{Performance under Adaptation}
\label{sec:perform_adapt}

The performance study uses the implementation of \Pe described in Section
\ref{sec:prototypical_ref_impl}. Even if it is not a full database
implementation with all features (no backup and no recovery functionality) it is
sufficient for measuring the performance of \Pe under different situations. Since backup and recovery are normally inactive there is no impact on the concurrency mechanism. Therefore, the performance measurements would also be valid for a fully featured database system. Clearly, if backup or recovery are active, this would impair performance. This would also apply to our prototype.

The study analyzes different workload profiles indicated by a sequence of
workloads with a total life-span of one second each.
The workload is held constant for one second (called \emph{epoch}).
The arrival rate $\lambda$ for the workload ranges from $6.66$ tas/sec up to a
heavy overload of over $300$ tas/sec.
These values have been chosen, to show the behavior of the overloaded system
with frequent aborts and the behavior under moderate workload with a stable
commit rate.

During one epoch (1 sec) the commit rate is measured 10 times (sample rate $sr = 10$/sec). For simplicity, all transactions read and write only one data item, i.e., the worst case is simulated where an item in $O$ suddenly becomes a bottleneck. The time unit in all simulations is milliseconds if not stated otherwise.

To obtain a preliminary understanding the first experiments study short living
transactions with no disconnect time during three epochs. Afterwards long living
transactions with a random disconnect time $dt$ between 100 and 1000
milliseconds are analyzed over seven epochs.
A disconnect time $dt$ within these bounds simulates typical situations.

Finally, barrier $\beta$ is enabled for the next set of experiments. The set up of long living transactions and seven epochs is always the same except for the response time barrier $\beta$ which varies between $1000$ and $15000$ msec. We
study the effects on commit rate $cr$ and response time $rt$. Each experiment was executed three times.

\subsection{Short Living Transactions with Three Epochs}
\label{ssec:short_living_transactions_and_three_epochs}

Table \ref{tab:short_tas_3epochs} lists our test scenarios and summarizes the result. The right column of the table refers to the corresponding figures for a detailed analysis. The four tests use a different arrival rate $\lambda$ for each epoch (one second interval) as marked in the Epochs column. The first two test scenarios do not require a concurrency control adaptation to demonstrate the base performance without adaptation. In Tests \#3 and \#4 the workload is increased to trigger adaptation. 

\begin{table}[htbp]
  \centering
  \caption{Results for three epochs with different workload, $\gamma = 0.9$, $\delta = 5$\% and $dt=0$.}
    \begin{tabular}{rrrrrr}
    \toprule
    \multicolumn{6}{c}{Summary} \\
    \midrule
    Test\# & Epochs & \o$(cr)$   & $\sigma(cr)$   & \o$(rt)$ & Figure \\
    1&  9,14,19  & 1,00   & 0,00  & 3,6  & \ref{fig:adaptation_12-15} (a) \\
    2& 153,176,176 & 0,89 & 0,16  & 2    & \ref{fig:adaptation_12-15} (b) \\
    3& 10,19,178 & 0,96   & 0,06  & 3,7  & \ref{fig:adaptation_12-15} (c) \\
    4& 168,310,309 & 0,90 & 0,05  & 2824 & \ref{fig:adaptation_12-15} (d) \\
    \bottomrule
    \end{tabular}%
  \label{tab:short_tas_3epochs}%
\end{table}%

The average response time \o$(rt)$ is very high for test scenario \#4.
This is the result of an increasing overload, which quickly triggers adaptation
at the beginning of the second epoch (see Figure \ref{fig:adaptation_12-15}
(d)). This leads to a mostly sequential execution of the transactions, which
explains the very high \o$(rt)$ and the high \o$(cr)$ at the same time.
This increase of $cr$ is typical for scenarios after adaptation to $P$ has taken
place. It continues until the upper bound $\gamma + \delta$ is reached.
Then the adaptation switches back to class $O$.

As the tests indicate later, it would be better to add an additional criteria
for the re-adaptation from $P \rightarrow O$.
If the workload is still high (wait queue $> 1$) the data should remain in $P$
until the workload is low again before going back to $O$.
This measure could avoid multiple re-adaptations that produce an unstable system
behavior during a sudden transition of the workload from heavy overload to low
workload.

Figure \ref{fig:adaptation_12-15} shows the commit rate $cr$, lower and upper
bounds (set by $\gamma \pm \delta$), and the accumulated number of aborts and
commits of the four test scenarios.

\begin{figure*}[ht]
\centering%
\includegraphics[scale=0.7]{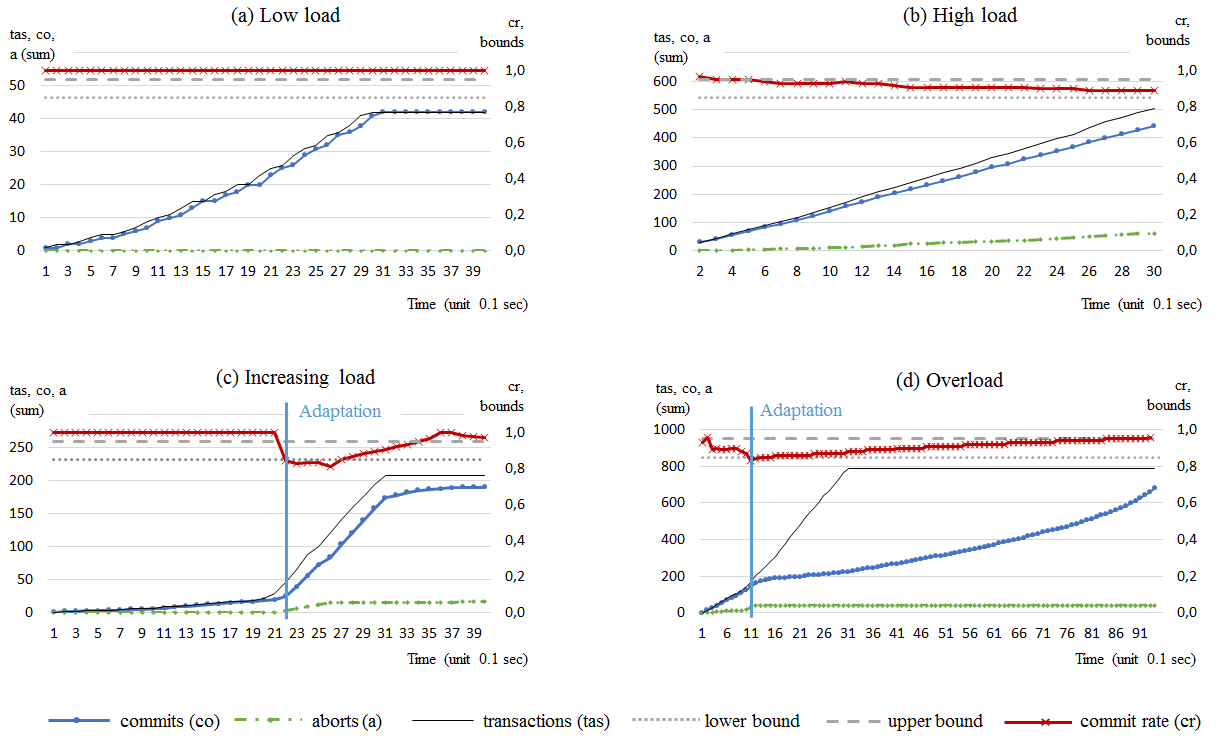}
\caption{Various short workloads to demonstrate Run-time Adaptation; (a) low $9-19$ tas/sec , (b) high $\approx 160$ tas/sec, (c) increasing load $10-180$ tas/sec, (d) increasing overload $170-310$ tas/sec, $\gamma=90\%$, $\delta=5\%$, $sr=10/sec$, and $dt=0$.}
\label{fig:adaptation_12-15}
\end{figure*}

Test \#1 has a low workload in all three epochs. The load starts with 9 tas/sec,
continues in epoch \#2 with 14 tas/sec and in the last epoch the workload rises
to 19 tas/sec. The transactions are executed as they arrive and no concurrent
interleaving transactions occur. As expected, no adaptation takes place. From
the corresponding Figure \ref{fig:adaptation_12-15} (a) it can be seen that the
commit rate is 1 and no aborts occur.
After $3.1$ sec (31 time units) all transactions have successfully terminated
and the number of commits remain constant.
Test \#1 is the only scenario without contention but surprisingly not the
shortest $rt$. The reason for this is that a commit is more expensive than an
abort for an optimistic CC. Compared to the other tests, Test \#1 has no aborts
and a commit rate of 100\%.

For Test \#2 the load is high ($\approx 160$ tas/sec) and nearly constant for three seconds. The load is heavy and contention is present as can be seen from the number of aborts and the decreasing commit rate. Figure \ref{fig:adaptation_12-15} (b) shows that the commit rate does not fall below the re-classification limit, hence no adaptation occurs. 
The data remains in class $O$ and the optimistic CC has low overhead which results in a short response-time of only $2$ msec. 

Part (c) of Figure \ref{fig:adaptation_12-15} (Test \#3) shows the results for
an increasing workload where finally in the third epoch the adaptation is
triggered. The workload starts with $10 - 19$ tas/sec for two seconds and
continues with $178$ tas/sec for the third epoch. The commit rate drops under
the minimum threshold ($\gamma - \delta$) at the blue vertical line ($2.2$ sec
after start).
The CC-mechanism immediately switches to locking and the number of aborts
decreases (the accumulated abort graph makes a sharp bend to a lower gradient).
During the third epoch the workload is slightly higher than the system can
immediately execute.
This can be seen from the slowly growing gap between the accumulated transaction
arrival (tas) and the accumulated committed transactions (co).
The average response time \o$(rt)$ stays low since during the first two seconds
the transactions were executed under $O$ with short $rt$.

It is interesting to compare Tests \#2 and \#3. Test \#2 has a constant high
workload, but not high enough to trigger the adaptation, hence, the data remains
in $O$.
This is the reason for the very short response time. Test \#3 has initially a
low workload, but in Epoch \#3 the workload just exceeds the threshold and
adaptation to $P$ applies.
This leads to a higher $rt$ even if the average workload is below the workload
of Test \#2.

Also, a start with low load (Tests \#1 and \#3) reduces the
response-time because all transactions of the first epoch are executed under optimistic CC with a short $rt$.

Test \#4 produces a heavy and increasing overload which triggers adaptation at the end of epoch 1. 
The gap between committed and arrived transactions grows until the arrival ends after 3 seconds. 
The adaptation to $P$ allows to increase the commit rate until after 10 sec all queued transactions have terminated. 
The system needs 7 sec to process the queued transactions after the arrival of transactions has stopped before it becomes resilient. 
This explains the high mean response time \o$(rt)$.

It can be noted that run-time adaptation under heavy workload achieves an average commit rate \o$(cr)$ of approximately 
 $90\%$, which was preset by $\gamma$. 
The price for improving $cr$ is clearly a longer response-time $rt$ which grows to $2.8$ seconds for continuous overload in test-case \#4. 

 The commit rate $cr$ is the basis for adaptation. When $cr$ drops below the lower bound $\gamma-\delta$ the adaptation is triggered and $cr$ increases again. The commit rate $cr$ increases until the upper bound $\gamma + \delta$ is reached which again triggers re-classification. 

Summarizing, for sudden increases and decreases of $cr$, adaptation ensures
a good response-time and a high commit rate if transactions are short lived
($dt=0$) and the system is not permanently overloaded. 
If contention constantly remains high, adaptation has severe effects
on the response-time.

\subsection{Long Living Transactions, Seven Epochs, and $\beta$ disabled}
\label{ssec:long_living_transactions_and_seven_epochs_beta_disabled}

Long living transactions are characterized by a certain time interval between
the read phase and the write phase where no data access occurs.
Some authors
\cite{Laux2009SQL} \cite{DBLP:conf/sigmod/AdyaGLM95} \cite{DBLP:conf/cloud/DingKDG15} \cite{DBLP:conf/vldb/HuangSRT91} \cite{DBLP:journals/tods/AgrawalCL87} \cite{Laux2010}
call this interval "think time" when a typical transaction reads and displays
data, then the user thinks about it, and finally modifies or adds some values.
We prefer to call this time "disconnect time", because Web based transactional
systems tend to logically disconnect from the database during this period.

For the tests a disconnect time $dt$ from $100$ - $1000$ msec was randomly chosen. Each test consisted of seven epochs with different workloads.
Workload W1 starts with $\lambda = 7 - 14$ tas/sec and rises the workload in epochs $3 - 7$ from 80 tas/sec continuously to 106 tas/sec. 
Workload W2 stresses the system with an increasing overload from $66 - 460$ tas/sec.

The detailed workload profiles are as follows:
\begin{itemize}
\item W1=(7,14,80,87,93,100,106) and
\item W2=(66,132,200,265,332,400,460).
\end{itemize}
Each number denotes the transactions arriving during the respective Epoch of one second each.
The tests were executed with two target commit rates $\gamma = 0.9$ and $0.7$.
Table \ref{tab:seven_epochs_no_beta} shows a summary of the results.

\begin{table}[htbp]
  \centering
  \caption[Results of seven epochs.]{Results of seven epochs with workload
  W1$=(7,14,80,87,93,100,106)$, W2$=(66,132,200,265,332,400,460)$, $\gamma =(0.9,0.7)$, random disconnect time $dt=100-1000$ ms, and barrier $\beta$ disabled.}
    \begin{tabular}{rrrrrrr}
    \toprule
    \multicolumn{1}{c}{Workload} & \multicolumn{1}{c}{$\gamma$} & \multicolumn{1}{c}{\o$(rt)$} & \multicolumn{1}{c}{\#Tas} & \multicolumn{1}{c}{\o$(cr_{\text{eff}})$} & Figure \\
    \midrule
    W1    & 90\%  & 4561  & 487   & 82\% & \ref{fig:sample_1_seven_epoches_w1}\\
    W2    & 90\%  & 24104 & 1845  & 82\% & \ref{fig:sample_5_seven_epoches_w2}\\
    W1    & 70\%  & 927   & 483   & 57\% \\
    W2    & 70\%  & 18957 & 1845  & 46\% & \ref{fig:sample_51_seven_epoches_w2_70}\\
    \bottomrule
    \end{tabular}%
  \label{tab:seven_epochs_no_beta}%
\end{table}%

Adaptation from $O \rightarrow P$ causes a systematic abort of pending transactions originating in $O$. To take these aborts into account the effective commit rate is defined as:
\begin{equation}
cr_{\text{eff}} := \frac{\mbox{\# committed tas}}{\mbox{\# terminated tas}}
\label{eq:commit_rate_eff}
\end{equation}
The effective commit rate $cr_{\text{eff}}$ measures -as the name suggests- the
performance of the system as shown to the user and the previously defined commit
rate $cr$ is used to trigger adaptation, because this indicator is more
sensitive to the workload.
The effective commit rate $cr_{\text{eff}}$ reached our tests 82\% for the first and $\approx 50$\% for the second value of $\gamma$.
Note that without adaptation all experiments would
have a commit rate between 1 and 3 percent only due to the long living nature of 
the transactions and the higher conflict potential. 
This is also the reason why the performance in this test scenario is lower than  in the previous subsection without disconnect time.

W1 has the shortest response-time due to the comparatively low workload.
In Epoch 3 with high workload ($80$ tas/sec) quickly lets $cr$ drop under the
lower boundary $\gamma-\delta = 0.85$ (see Figure
\ref{fig:sample_1_seven_epoches_w1}).
The adaptation to $P$ is triggered and in the following epochs $cr$ rises again
until the upper boundary is reached.
The data is reclassified in $O$ after 11 epochs and again the $cr$ drops, but
recovers faster as before, because the arrival of new transactions stopped after
7 epochs and after 12 epochs all pending transaction have terminated.

\begin{figure*}[ht]
\centering
\includegraphics[scale=0.6]{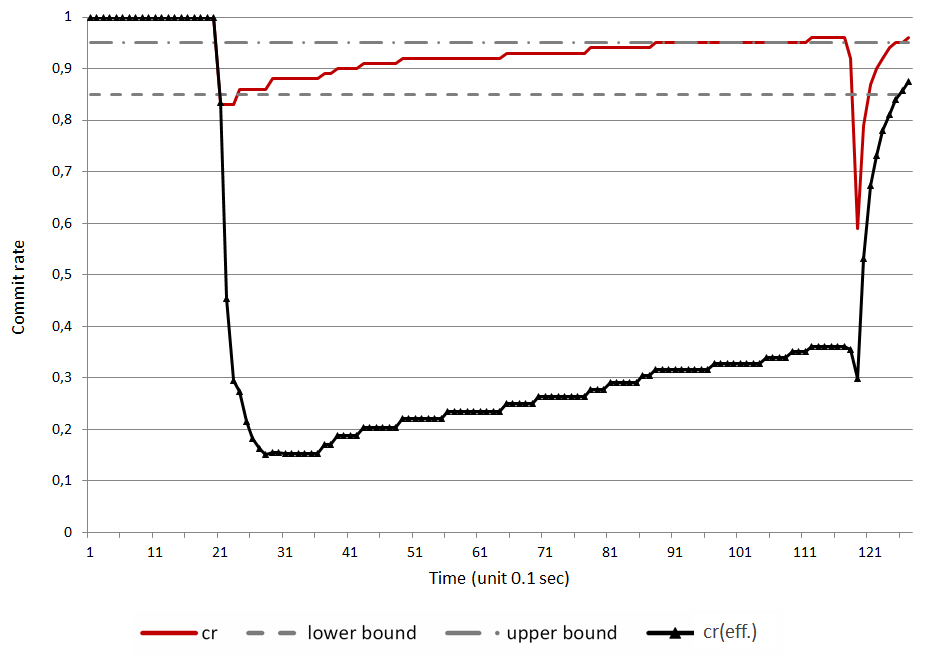}
\caption{Run-time adaptation for W1 $=(7,14,80,87,93,100,106)$,
$\gamma =0.9$, random disconnect time $dt=100-1000$ ms, and barrier $\beta$ disabled.}
\label{fig:sample_1_seven_epoches_w1}
\end{figure*}

The adaptation profile for workload W2 (permanent contention) shown in Figure
\ref{fig:sample_5_seven_epoches_w2} is similar to W1. Due to the heavy
workload starting in Epoch 1, the adaptation is already triggered at the end of
Epoch 1. The permanent overload leads to a significantly longer mean
response-time due to locking and queuing in $P$.

\begin{figure*}[ht]
\centering
\includegraphics[scale=0.6]{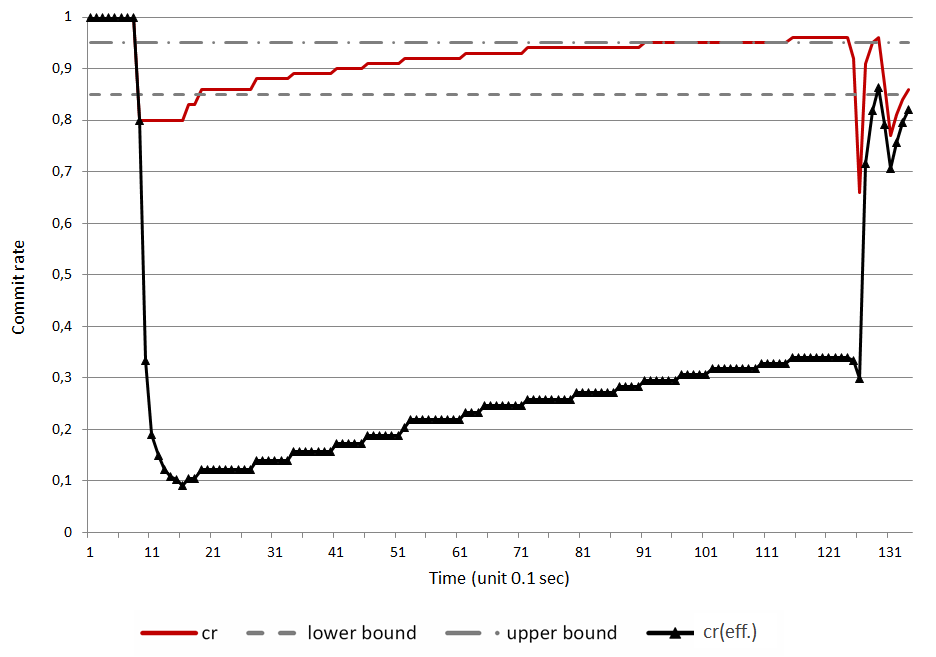}
\caption{Run-time adaptation for W2 $=(66,132,200,265,332,400,460)$, $\gamma =0.9$,
random disconnect time $dt =100-1000$ ms, and barrier $\beta$ disabled.}
\label{fig:sample_5_seven_epoches_w2}
\end{figure*}

For workload W2 (permanent contention, second row), the mean response-time is
significantly longer due to the queuing effect under $P$. Taking the same
workload with a target commit rate of $\gamma$ = 70\% the adaptation behavior
shows an instability (Figure \ref{fig:sample_51_seven_epoches_w2_70}).
After adaptation to $P$, the upper boundary for $cr$ is reached very quickly
during the third epoch (time = 27 units = 2.7 sec) and the data is reclassified
again in $O$ (Rule~ \ref{rule:adapt_O_P}, 2)) with the result that the commit rate
$cr$ drops to 40\%. After this decrease, the system recovers slowly and reaches
the upper boundary in epoch 14 again. At this point the arrival of transaction
has already stopped but the remaining (queued) transactions cause another jitter
for $cr$.

\begin{figure*}[ht]
\centering
\includegraphics[scale=0.6]{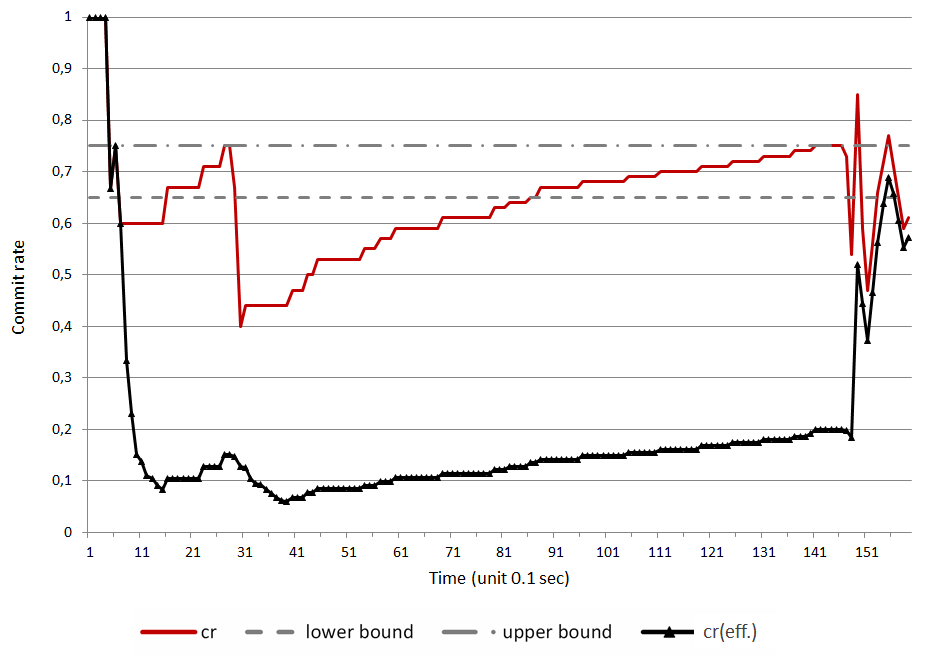}
\caption{Run-time adaptation for W2 $=(66,132,200,265,332,400,460)$, $\gamma =0.7$,
random disconnect time $dt =100-1000$ ms, and barrier $\beta$ disabled.}
\label{fig:sample_51_seven_epoches_w2_70}
\end{figure*}

The reason for this oscillating effect is that Rule~ \ref{rule:adapt_O_P}, 2) does
not look at the number of queued transactions it only takes criteria $cr >
\gamma + \delta$ to re-classify the data in $O$ again. But in this situation all
pending transactions except one will fail due to concurrency violation. This
lets the commit rate $cr$ drop as low as 40\%.

It takes now longer for the adaptation mechanism to reach the upper boundary
because many transactions have already aborted and accordingly more transaction
have to commit to rise $cr$. The upper boundary is reached after 14 sec when the
arrival of transaction has already stopped.

Summarizing, despite a sudden increase in contention, adaptation keeps the
commit rate stable even if transactions are long living. If contention remains
high, the response-time is getting longer since $P$ queues
transactions. 
With a low $\gamma$ the mechanism tends to become unstable and an oscillating behavior can be noticed. Having $\gamma$
close to 100\% is recommended since adaptation is triggered earlier. To prevent
an excessive increase in response-time, $\beta$ has to be enabled as
discussed in the next section.

\subsection{Long Living Transactions, Seven Epochs, and $\beta$ enabled}
\label{ssec:long_living_transactions_and_seven_epochs_beta_enabled}
The following experiments study the effects on the workloads of the previous subsection if barrier $\beta$ is enabled and $\gamma$ is high (=90\%) as recommended before. 
Table \ref{tab:seven_epochs_beta} summarizes the results and shows barrier
$\beta$, mean $cr_{\text{eff}}$, and the mean response-time \o$(rt)$ for  workloads W1 and W2. 
It further links to Figures \ref{fig:W1_seven_epoches_beta} and
\ref{fig:W2_seven_epoches_beta} showing sample graphs of
one run of an experiment at a time.

\begin{table}[hbp]
  \centering
  \caption[Results of seven epochs.]{Results of seven epochs with workload
  W1$=(7,14,80,87,93,100,106)$, W2$=(66,132,200,265,332,400,460)$, $\gamma=(0.9)$, random disconnect
  time $dt=100-1000$ ms, and barrier $\beta$ enabled.}
    \begin{tabular}{rrrrrr}
    \toprule
 Workload & $\beta$ & \o$(rt)$ & \o$(cr_{\text{eff}})$ & Figure\\
    \midrule
    W1    & 1000  & 187  & 17\% & Figure \ref{fig:W1_seven_epoches_beta} (a)\\
    W1    & 3000  & 343  & 18\% & Figure \ref{fig:W1_seven_epoches_beta} (b)\\
    W1    & 5000  & 355  & 29\% & --\\
    W1    & 8000  & 1960 & 36\% & Figure \ref{fig:W1_seven_epoches_beta} (c)\\
    W1    & 15000 & 3758 & 39\% & --\\
    W2    & 1000  & 136  & 3\%  & Figure \ref{fig:W2_seven_epoches_beta} (a)\\
    W2    & 3000  & 248  & 16\% & Figure \ref{fig:W2_seven_epoches_beta} (b)\\ 
    W2    & 5000  & 1219 & 18\% & --\\
    W2    & 8000  & 1172 & 25\% & Figure \ref{fig:W2_seven_epoches_beta} (c)\\
    W2    & 15000 & 2625 & 31\% & --\\
    \bottomrule
    \end{tabular}%
  \label{tab:seven_epochs_beta}%
\end{table}%  

As Table \ref{tab:seven_epochs_beta} shows, each workload was executed with
different values ($1000, 3000, 5000, 8000, 15000$) for $\beta$.
All experiments show that the mean response-time is bounded by $\beta$ and the
effect of a very long response-time of 19 or 24 seconds
 (see Table \ref{tab:seven_epochs_no_beta} of the previous section's
 experiments)
with workload  W2 no longer occurs.
The table also shows that the value of $\beta$ does not allow to infer the
actual mean response-time.
However, it shows that for an increasing $\beta$, the response-time and the
commit rate increase and $\beta$ correlates with these values.

Barrier $\beta$ does not directly match with the maximum response-time as given,
for example, by Service Level Agreements (SLA).
The response time depends on the workload and is directly influenced by the
transactions' arrival rate. The distribution of the response time depends
additionally on the concurrency model.
For a queuing system like the concurrency model of class $P$ a Poisson arrival
process is assumed. The response time $rt$ is calculated as wait time $wt$ in
the queue plus transaction processing $pt$ time.
Even in the simplest queuing system, the P/P/1, with Poisson arrival and one
service process, only statements about the mean response time \o$(rt)$ can be
made. To estimate the expected response time $rt_{\text{est}}$, the arrival and
service rate is necessary. But in the present case both rates are heavily
changing.
If the arrival rate would only change due to statistical variation no adaptation
would be necessary. But if a systematic change happens, e.g., because the data
access type changes, the original class assignment is not any more suitable.
 Adaptation changes the service time and hence the service rate as well. The
 service time $st$ in the case of $P$ is the time between read and write.
 The only indicators for the estimated response time are the wait queue length
 $|Q_w|$ and the past average  \o$(st)$.
This leads to Formula (\ref{eq:expect_rt}):
\begin{equation}
rt_{\text{est}} := \o(st) \times (|Q_w| + 1)
\label{eq:expect_rt}
\end{equation}

The calculation takes into account the transactions that are already queued for
execution and the average time to process a transaction. The processing time
includes a possible waiting time due to locking.

The SLA defines a limit for the response time $rt$ and in the case of an SLA
violation, a penalty has to be paid.
There is a trade off between loosing transactions or having excessive response
time.
Assuming an average price of $r$ for each lost transaction and a penalty of $p$
for every transaction exceeding the response time limit $\beta$ the trade-off is
given at the intersection of two cost functions that depend on the commit rate
$cr$ and the number of transactions $tas$:
\begin{eqnarray}
ca := r \times (1-cr) \times tas  \\
\label{eq:cost_abort}
cp := p \times tas_{rt > \beta}(cr) \times tas
\label{eq:cost_sla}
\end{eqnarray}

If the functions are normalized with the number of transactions $tas$ then Figure \ref{fig:trade_off} shows the principal graph for this trade off. 
The break even point for this normalized example is given at commit rate $cr =0.72$. In practice, the database system will measure the actual and number of aborts and the application should monitor these values and calculate the break even based on the costs for SLA violation and failed transactions.
 
\begin{figure}[h]
\centering
\includegraphics[scale=0.55]{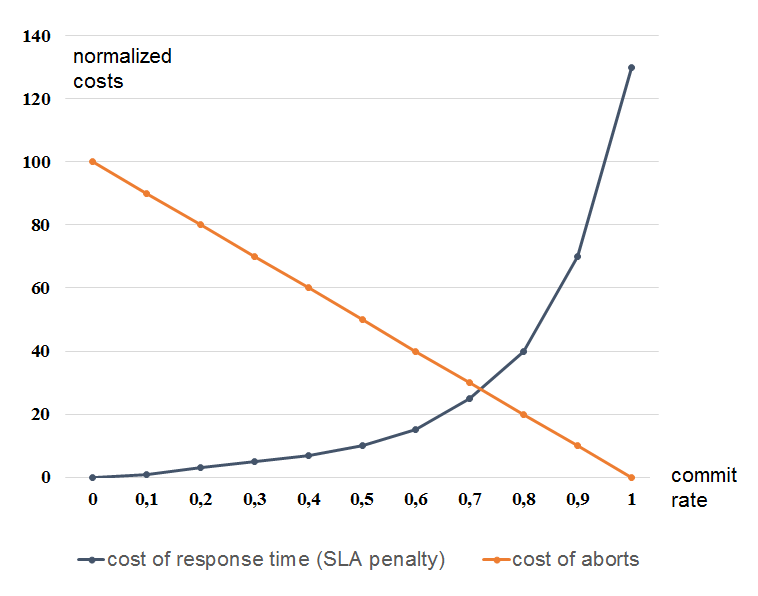}
\caption{Example trade off between aborts and response time in terms of costs.}
\label{fig:trade_off}
\end{figure}

In the case of a fast changing workload it is difficult to estimate the workload
profile. If the calculation is based on the past workload, the system may not
react fast enough to sudden changes of the arrival rate.

The situation is more promising if a workload profile is known in advance.
This is often the case if employees have clear routines during their workday.
Assume, for example, the following tasks: order processing in the morning, stock
administration after lunch, and master data management from 5 pm to 6 pm. In
this scenario data access to product data in the morning and afternoon will be
classified $O$ while the product data will be re-classified to $P$ from 5 - 6
pm.

\begin{figure*}
\centering
\includegraphics[scale=1.0]{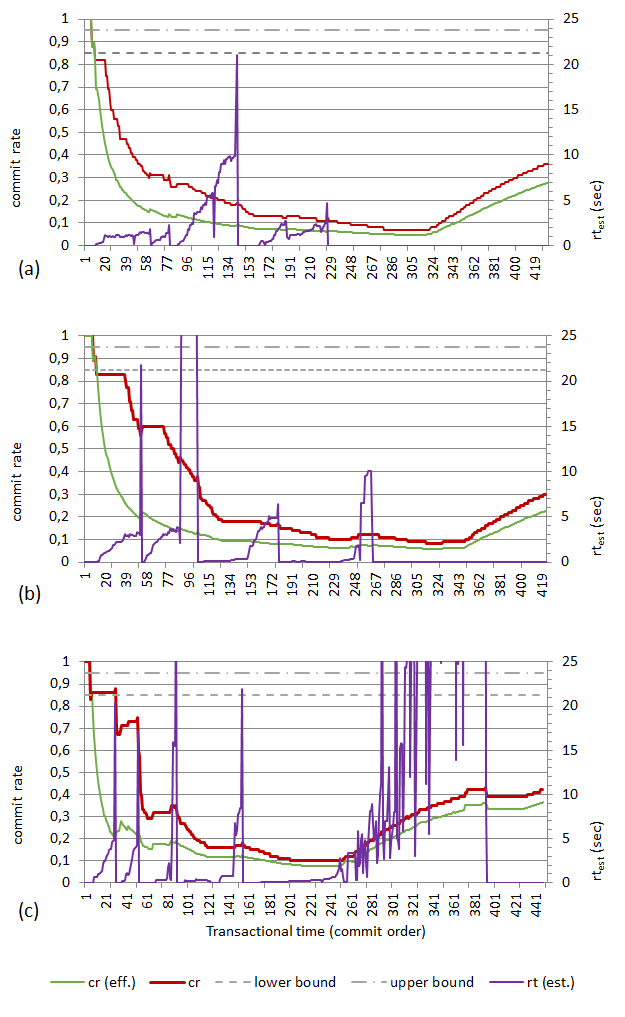}
\caption{Run-time adaptation profile for target commit rate $\gamma=0.9$: (a) workload W1 with $\beta=1000$, (b) Workload W1 with $\beta=3000$, and (c) Workload W1 with $\beta=8000$.}
\label{fig:W1_seven_epoches_beta}
\end{figure*}

\begin{figure*}
\centering
\includegraphics[scale=1.0]{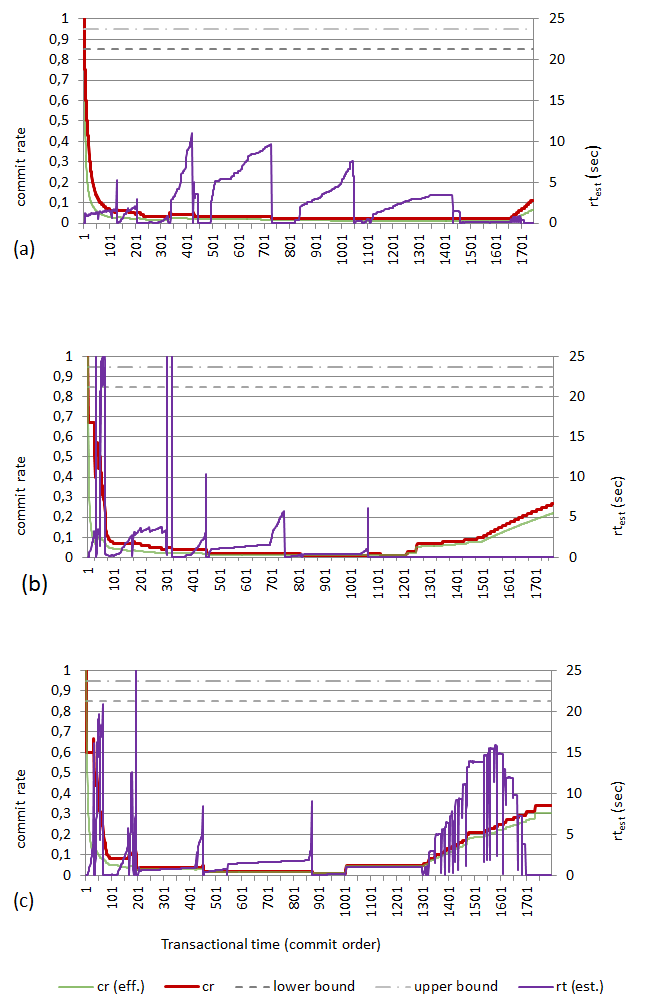}
\caption{Run-time adaptation profile for target commit rate $\gamma=0.9$: (a) workload W2 with $\beta=1000$, (b) Workload W2 with $\beta=3000$, and (c) Workload W2 with $\beta=8000$.}
\label{fig:W2_seven_epoches_beta}
\end{figure*}

Figures \ref{fig:W1_seven_epoches_beta} and \ref{fig:W2_seven_epoches_beta}
illustrate the run-time adaptation profile if $\beta$ is set. The time of the
estimated response time $rt_{\text{est}}$ is shown on the right vertical axis.
The left ordinate shows the commit rate and the target boundaries.
The horizontal axis shows the transactional time which is given by a sequence of
time ordered events. The time interval from one event to the next is not
constant and hence the time scale is not linear.

In Figure \ref{fig:W1_seven_epoches_beta} (a) the commit rate $cr$ is $1$ during
the first two seconds when the workload is low.
When the overload begins after two seconds the commit rate $cr$ drops quickly
below the lower bound $\gamma-\delta = 0.85$ and adaptation to $P$ takes place.
The effective commit rate \crb (green line in Figures
\ref{fig:W1_seven_epoches_beta} and \ref{fig:W2_seven_epoches_beta}) always
stays below $cr$ because $cr$ does not count aborts due to the adaptation $O
\rightarrow P$, but \crb does.
After adaptation to $P$ the system stabilizes the commit rate $cr$ as shown by
the red graph.
This appears in all test runs and can be seen more clearly when we have a higher
response time limit $\beta$ as in part (b) and (c) of Figure
\ref{fig:W1_seven_epoches_beta}.

In the case of $\beta=8000$ (part (c)) the commit rate increases until the
estimated response time exceeds the preset limit $\beta$. If $rt_{\text{est}} >
\beta$ the re-adaptation to $O$ is triggered by Rule
\ref{rule:adapt_O_P_with_barrier}, 2) because $cr$ is still below the lower
bound.
The result is that pending transactions abort and in the following the commit
rate decreases.
Run time estimation works for $P$ only because a wait queue $Q_w$ is needed for
Formula (\ref{eq:expect_rt}).
If there is no wait queue then the number for \rtb is set to $0$.
Hence, as soon as \rtb exceeds $\beta$ the systems switches to $O$ and the \rtb
drops to 0, which explains the saw tooth figure of \rte.
  
A low limit for the response time as in Figure \ref{fig:W1_seven_epoches_beta}
(a) causes a low $cr$ and many transactions run in $O$, which can only be
observed indirectly by the low \rte.
If $\beta$ increases (Figure \ref{fig:W1_seven_epoches_beta} (b) and (c)), the
number of aborts reduces because the system remains longer in $P$, which causes
more waiting transactions which in turn cause more and higher peaks in the \rte.

For workload W2, Figures \ref{fig:W2_seven_epoches_beta} (a) - (c) illustrate the workload profiles for
$\beta=1000$, $\beta=3000$, and $\beta=8000$. 
The graph of $rt_{\text{est}}$  for $\beta=1000$ shows
regularly appearing peaks of longer duration caused by the permanent
contention. This effect nearly disappears for larger values of $\beta$ $(\geq 8000)$ because after adaptation to $P$ 
much more transactions are allowed to queue up and commit later. 
This is indicated by higher and shorter \rtb peaks, which move to the beginning of the test run.  
As a result the \crb is slightly higher if $\beta$ is high.

When the workload ends after 7 seconds and the \rtb drops below $\beta$ Rule~
\ref{rule:adapt_O_P_with_barrier}, 1) applies and the concurrency class switches
to $P$ which lets $cr$ and \crb rises until all transactions have terminated.

Part (c) of Figures \ref{fig:W1_seven_epoches_beta} and
\ref{fig:W2_seven_epoches_beta} show an effect of instability. This happens
after the arrival of transactions has ended and before all transactions have
terminated. The system has switched to $P$ because \rtb was below the limit
$\beta$ and now the high number of remaining transactions in the queue leads to
\rte $> \beta = 8000$ and the re-adaptation to $O$ lets \rtb drop below $\beta$,
which again triggers Rule~ \ref{rule:adapt_O_P_with_barrier}, 1) and forces the
data to class $P$. The oscillation between $P$ and $O$ continues until most
transactions have terminated and the queue is short enough to keep \rtb below
the limit $\beta$.

In part (a) and (b) this effect shows up in a moderate form during the workload but not after its termination because the lower $\beta$ does not allow many transactions to be queued and delayed for a longer time.   

Summarizing, the usage of $\beta$ keeps the mean response-time bounded, but
compared to having $\beta=\infty$, a higher abort rate is the price that has to be paid.
The exact determination of $\beta$ demands a continuous adjustment and has to be
carried out by applications. In particular in the case of a mixed workload a
greater $\beta$ causes short peaks in the $rt_{\text{est}}$ since more transactions are allowed to commit in $P$. 
A lower $\beta$ causes longer peaks since many transactions wait and their abort
is not yet known. They continue in $O$ and abort at write-time at the earliest.

Generally, it is important to know that \Pe classifies hot spot items (HSIs) in
classes $R$ and $E$, if possible. This is the better choice if the semantic of the data allows this classification.
Adaptation is only provided to handle a sudden,
but impermanent increase of the contention for items classified in default class $O$. 
Permanent contention is likely to cause any system to become overloaded. \Pe is at least
able to protect itself by trading off response-time and commit rate.

\section{Related Work}
\label{sec:related_work}
This paper extends the findings of \cite{LessnerDBKDA2015} and is based on the Ph.D. thesis \cite{Lessner2014} of the main author, which introduces \PeFS. A vast amount of work \cite{JimGray.1993} \cite{GerhardWeikum.2002} has been
carried out in the field of transaction management and CC, but so far no attempt
was undertaken to use a combination of CC mechanisms according to the data usage
(semantics). Most authors use the semantics of a transaction to divide it into
sub-transactions, thus achieving a finer granularity that hopefully exhibit less
conflicts. Some authors \cite{Garcia-Molina1983} use the semantics of the data
to build a compatibility set while others try to reduce conflicts using
multiversions \cite{Phatak2004} \cite{Graham1994}. The reconciliation mechanism was
introduced in \cite{Laux2009} and is an optimistic variant of ``The escrow transactional method'' \cite{O'Neil1986}. Escrow relies on guaranties given to the
transaction before the commit time, which is only possible for a certain
class of transactions, e.g. transactions with commutative operations. Optimistic concurrency control was introduced by \cite{Kung.1981}, which did not
gain much consideration in practice until SI, introduced by \cite{Berenson1995},
has been implemented in an optimistic way.
SI in general gained much attention through \cite{Fekete2005a} \cite{Cahill2009}, and also in practice \cite{Ports2012}. Its strength lies in applications that have to deal with many concurrent queries but has only a moderate rate of updating transactions. 
\PeFS, however, is designed for high performance updating transactions processing, especially with data hot spots.

\section{Conclusion and Outlook}
\label{sec:outlook}
The paper presented a multimodel concurrency control mechanism that breaks with the \emph{one concurrency mechanism fits all needs}. The concurrency mechanism is chosen according to the access semantic of the data. Four concurrency control classes are defined and rules guide the developer with the manual classification. When the access semantic is unknown the default class $O$ with an optimistic snapshot isolation mechanism is chosen. For those data the model is extended to dynamically change the class assignment if the performance suggests a pessimistic mechanism $P$. The simulations with the prototype demonstrated that the mechanism is working and tests with the TPC-C++ benchmark  resulted in a 3 to 4 times superior performance. The adaptation mechanism provides a response time guaranty to comply with Service Level Agreements for
the price of a lower commit rate.   

The tests revealed an instability in the form of an oscillating adaptation. This occurs only under an abrupt change of the workload from overload to inactive system.
However, a refinement of the adaptation rule could possibly avoid the oscillation when the re-classification from $P \rightarrow O$ is executed. This could be achieved if the re-classification is only triggered when the wait queue is small or empty.

A dynamic algorithm for an automatic classification of data would be
desirable and would relief the developer from manual classification. The same mechanism could then be used to dynamically adapt the data according to a changed usage profile. 

Also, comprehensive performance tests that consider replication, online backup and a study of run-time adaptation under real-life conditions is still missing.
\newpage
% Generated by IEEEtran.bst, version: 1.14 (2015/08/26)

\end{document}